\newcommand{\T}{\ensuremath{\mathcal T}}
\DeclareMathOperator{\depth}{depth}
\DeclareMathOperator{\sibling}{sibling}
\DeclareMathOperator{\treeroot}{root}
\DeclarePairedDelimiter{\ceil}{\lceil}{\rceil}
\DeclarePairedDelimiter{\floor}{\lfloor}{\rfloor}
\DeclarePairedDelimiter{\paren}{\lparen}{\rparen}
\title{Splay Top Trees\relatedversion{}}
\date{}
\author[1]{Jacob Holm\thanks{Partially supported by the VILLUM Foundation grant 16582, ``BARC".}}
\affil[1]{Department of Computer Science, University of Copenhagen, DK-2100 Copenhagen, Denmark}
\author[2]{Eva Rotenberg\thanks{Partially supported by Independent Research Fund Denmark grant 2020-2023 (9131-00044B) ``Dynamic Network Analysis'', and the VILLUM Foundation grant 37507 ``Efficient Recomputations for Changeful Problems''.}}
\affil[2]{Department of Applied Mathematics and Computer Science, Technical University of Denmark,
DK-2800 Lyngby, Denmark}
\author[2]{Alice Ryhl\thanks{Partially supported by the VILLUM Foundation grant
37507 ``Efficient Recomputations for Changeful Problems''. Also a Google
employee.}}
\begin{document}

\newcommand\relatedversion{}

\maketitle





\fancyfoot[R]{\scriptsize{Copyright \textcopyright\ 2023\\
This paper is available under \href{https://creativecommons.org/licenses/by/4.0/}{the CC-BY 4.0 license}}}


\vspace*{-0.6cm}
\begin{abstract}
The \emph{top tree} data structure is an important and fundamental tool in dynamic graph algorithms. Top trees have existed for decades, and today serve as an ingredient in many state-of-the-art algorithms for dynamic graphs. In this work, we give a new direct proof of the existence of top trees, facilitating simpler and more direct implementations of top trees, based on ideas from splay trees. This result hinges on new insights into the structure of top trees, and in particular the structure of each root path in a top tree.
In amortized analysis, our top trees match the asymptotic bounds of the state of the art. 
\end{abstract}


%
%
%
%
\pagenumbering{arabic}


\section{Introduction}

An interesting topic in graph algorithms is \emph{dynamic graphs}; graphs that
undergo updates or changes.
Throughout the past half century, fundamental algorithmic questions about graphs
such as connectivity, minimum spanning tree, shortest paths, matchings, and
planarity, have been studied in the dynamic setting, where edges may appear and
disappear adversarially \cite{
Frederickson85,Eppstein97,HolmLT01,Thorup00,Kapron:2013,Wulff-Nilsen16a,NSW17,HolmRW15,patrascu06,%
DemetrescuI04,BrandN19, GutenbergW20, AbboudD16, KarczmarzMS22,%
Sankowski07, OnakR10,Solomon16,BaswanaGS18,BhattacharyaK19,BernsteinFH21,GrandoniSSU22,%
GalilIS99,EppsteinGIS96,Eppstein03,HolmR20%
}. For such problems, researchers engage in the pursuit of
a data structure with an (as efficient as possible) polylogarithmic algorithm for
updates and queries, or (conditional) lower bounds certifying that such
algorithms are unlikely to exist. For a wide range of problems, polylogarithmic
time data structures do indeed exist, and often, such algorithms use \emph{top
trees} as a tool towards efficient solutions.

The top tree~\cite{alstrup2005maintaining} is a data structure, which can be
used in combination with any (explicitly maintained) dynamic spanning forest of
the changing graph. The strength of top trees is that they can be used to
efficiently store concise summaries of information about subtrees of a tree in a
way that allows for efficient updates and queries. If we think of the dynamic graph problem
as a calculation, the top tree stores a tree of summaries of properties of subgraphs, and upon changes to the graph, we only need to recompute a logarithmically bounded number
of subgraph summaries.

More formally, the basic structure of a top tree can be seen as a binary tree of
contractions in the underlying tree. The leaf nodes of the top tree are the
edges in the underlying tree, and the internal nodes correspond to the union, or
\emph{merge}, of the two children. Each node must behave as a ``generalized
edge'' in the sense that it may share at most two vertices with the rest of the
graph. We denote by \emph{cluster} the set of edges corresponding to a node in
the top tree. Top trees build on similar principles as topology trees, which is
the tool introduced and used by Frederickson~\cite{Frederickson85} to get the
first non-trivial update algorithms for connectivity-related questions about
dynamic graphs~\cite{Frederickson85,Frederickson91,ItalianoPR93}. However, top
trees and topology trees differ in the fundamental way that while top trees use
a notion of clusters that generalizes edges, topology trees use a notion of
clusters that generalizes low-degree vertices.

Not only do top trees build on similar principles as topology trees, the
original proof that top trees of logarithmic height can be maintained goes via
reduction to topology trees~\cite[Section 6.2]{alstrup2005maintaining}. Most
implementations of top trees described to date have the form of either an extra
interface imposed on topology trees, or are implemented as an interface on
Sleator and Tarjan's link-cut trees~\cite{TarjanWerneck05}. 

In this work, we show a different route to top trees, and provide new proofs of top tree properties. We show that top trees with logarithmic amortized update times may be directly implemented in a manner akin to splay trees. We claim that such direct implementations are simpler, and provide pseudo-code and C code as proof of concept. While our approach is indeed simple to implement, it is not trivial that it has good theoretical guarantees, and we give proofs of its efficient amortized update times.

Just like the state-of-the art implementation by Tarjan and Werneck~\cite{TarjanWerneck05}, a caveat of our work is that all times are only amortized. However there are many results involving top trees for which this is not a problem. For one, there are many dynamic problems for which the efficient (polylog-time) solution using top trees is already amortized. Examples include  biconnectivity, $2$-edge connectivity, planarity testing, and diameter~\cite{HolmLT01,Thorup00, HolmRT17,HolmR17,HolmR20,alstrup2005maintaining}. Secondly, dynamic algorithms have a range of applications in static problems, where the solution to the static problem is computed via a sequence of subproblems, dynamically extracted from the original problem~\cite{BiedlBDL01,GabowKT01,MuchaS04,BarrB21}. When used as a subroutine this way, amortized running time is just as valuable as worst-case.

\textbf{Related work.}
Data structures for dynamic forests include link-cut trees that were introduced
by Sleator and Tarjan~\cite{SleatorTarjan83} and who used it, among other applications,
to solve flow-problems. Topology trees were introduced by
Frederickson~\cite{Frederickson85} and used to obtain the first sublinear-time
algorithms for dynamic graph connectivity. Later, top
trees~\cite{alstrup2005maintaining} were introduced and used to give new
polylogarithmic update-time algorithms for connectivity and related
problems~\cite{HolmLT01,Holm18a}.

Splay trees were introduced by Sleator and Tarjan~\cite{SleatorTarjan85} under
the name of \emph{Self-adjusting Binary Search trees}, as a simple way to obtain
efficient amortized time per operation, without guaranteeing that the tree is
always balanced.
The same paper also introduced the notion of \emph{semi-splaying}
as an attempt to reduce the total number of rotations
by a constant, while still guaranteeing efficient amortized time per operation. In~\cite{BrinkmannDegraerLoof09} it was confirmed empirically that using semi-splaying, ``the practical performance is better for a very broad variety of access patterns''.

Splay trees have been conjectured to be \emph{dynamically
  optimal}~\cite{ColeMSS00,Cole00}, meaning that for any sequence of operations, they do at most a constant factor more work than an optimal algorithm that knows the whole sequence of operations in advance. The subject of dynamic
optimality has been addressed also in connection with Tango
trees~\cite{DemaineHIP07}, and in recent advances on search trees on
trees~\cite{BoseCIKL20,BerendsohnKozma22}. 


%

In practice, the state-of-the-art implementation of top trees is by Tarjan and Werneck~\cite{TarjanWerneck05}. Their implementation is based on link-cut trees, that are enhanced with extra information, and their update-times are amortized. An experimental evaluation of topology trees in comparison with link-cut-trees is given by Frederickson~\cite{Frederickson97}, and Albers and Karpinski~\cite{AlbersKarpinski02} study splay trees in theory and practice, proposing a randomized variant with fast (i.e. $O(\log n)$) expected worst-case time per operation.



\textbf{Overview of paper.}
First, in Section~\ref{sec:overview-ds}, we give an overview of the data
structure and the operations we support. Then,
sections~\ref{sec:has-x-boundary}, \ref{sec:rotate}, \ref{sec:splay},
\ref{sec:consuming}, and \ref{sec:dynamic-operations} are dedicated to the
details of those data structures and operations:
In Section~\ref{sec:has-x-boundary}, we describe some very simple local queries in the top tree.
In Section~\ref{sec:rotate}, we describe the rotation-like subroutine that the splay tree uses.
Section~\ref{sec:splay} is devoted to splaying, and we describe and prove structural properties that make both semi-splaying and splaying easy to implement in $O(\log n)$ amortized time.
Section~\ref{sec:consuming} describes an important internal operation used as a subroutine by the top tree for several of its fundamental operations.
Section~\ref{sec:dynamic-operations} describes the dynamic operations that are called directly by the user of the top tree: \texttt{expose}, \texttt{deexpose}, \texttt{link}, and \texttt{cut}.
In Section~\ref{sec:implement}, we provide practical information about our implementation, which is available for download.
In Appendix~\ref{sec:expose-semi}, we describe a more complicated way to implement \texttt{expose} that does not rely on \texttt{full\_splay}. This reduces the number of changes to the top tree, and we conjecture that it is faster in practice.

\section{Overview of the data structure}

\label{sec:overview-ds}
A \emph{top tree} is a data structure that represents an \emph{underlying tree},
usually from some underlying forest.  A top tree can be augmented to store additional information, e.g.\ edge weights, and to answer various types of queries on the underlying tree.  It does this by maintaining a hierarchy of \emph{summaries} of certain connected subgraphs of the underlying tree, from which the answers can be efficiently computed.
Up to two vertices in each underlying tree can be marked as
\emph{exposed}, which affects the structure of the top tree and what
summaries are stored, and therefore which queries the top tree is
ready to answer quickly at any given time.  With no exposed vertices this could e.g.\ be the size or diameter of the underlying tree. With one vertex exposed, we can think of the underlying tree as rooted in that vertex and we might e.g.\ answer queries about the height.  With two vertices exposed,
that would typically be some question related to the path between
them, e.g.\ the length or the maximum edge
weight. See~\cite{alstrup2005maintaining} for more details.

\subsection{External operations}\todo{Public API?}
Using \(T_v\) to refer to the tree in the underlying forest containing the
vertex \(v\), and \(\T_v\) to refer to the corresponding top tree, the
operations we support are as follows:
\begin{description}
\item[expose\((v)\):] Makes \(v\) exposed, and returns the root\footnote{If \(v\) is an isolated vertex, we define both \(\T_v\) and \(\operatorname{root}(\T_v)\) to be null.} of \(\T_v\).
  Requires that the tree \(T_v\) containing \(v\) has at most \(1\) exposed
  vertex and that \(v\) is not currently exposed. Does not change which node
  is the root node of \(\T_v\).

\item[deexpose\((v)\):] Makes \(v\) not exposed, and returns the root of
  \(\T_v\). Requires that \(v\) is currently exposed in the tree \(T_v\)
  containing it. Does not change which node is the root node of \(\T_v\).

\item[link\((u,v)\):] Creates a new edge \(u,v\) in the forest and
  returns the root of the top tree for the resulting tree. Requires
  that \(u,v\) are in disjoint trees \(T_u,T_v\) and that neither
  tree has any exposed vertices.

\item[cut\((e)\):] Deletes the edge \(e\) with endpoints \(u,v\) from
  the forest and returns\footnote{Our C code does \emph{not} return
    the roots of the resulting trees, because returning multiple
    values is inconvenient in C.} the roots of the two resulting trees
  \(\T_u,\T_v\). Requires that the tree containing \(e\) has no
  exposed vertices.
\end{description}
This differs slightly from the standard description of top trees where
\texttt{expose} is usually defined to take up to two arguments, but it should be clear
that it is equivalent in terms of power. We give a direct proof that the
amortized cost of each operation is \(O(\log n)\).

\subsection{Structure}
The top tree is itself a rooted binary\footnote{The definition of top
  trees makes perfect sense for higher degree nodes, but for
  simplicity we will restrict ourselves to the binary case.} tree of
nodes, where each node corresponds to a \emph{cluster}, which is a
connected set of edges in the underlying tree. Each leaf in the top
tree corresponds to a single edge in the underlying tree, and each
internal node corresponds to the disjoint union of its two children.

We say that a vertex is a \emph{boundary vertex} of a cluster if it is
incident to an edge in the cluster and either is exposed or is also
incident to an edge outside the cluster. A cluster is \emph{valid} if
it has at most two boundary vertices.  Conceptually, the
boundary vertices lie on the boundary between the cluster and the rest
of the tree, and since valid clusters have at most two boundary vertices,
valid clusters can be thought of as ``generalized edges'' with the endpoints
being the boundary vertices.  It is possible for such a generalized edge to have less than two endpoints, but we can think of them as normal edges whose ``missing'' endpoints are irrelevant to us.

The top tree must satisfy the invariant that all clusters are
valid. We note that this invariant requires that each tree in the
underlying forest has at most two exposed vertices, because the root
of the top tree would otherwise have too many boundary vertices.

We categorize the (valid) clusters in the top tree as either \emph{path
  clusters} or \emph{point clusters}. A path cluster is a cluster with
two boundary vertices, and a point cluster is a cluster with zero or
one boundary vertices. For each internal node in the top tree, we also
define its \emph{central vertex} as the vertex shared by its two
children. The set of boundary vertices in a node is always equal to
the union of boundary vertices in its children, with or without the
central vertex removed.

\subsection{Orientation invariant}

When working with top trees, it is useful to think of the children of each node
to have a specific left-to-right order, or \emph{orientation}. (I.e.\ one child
is the left child and the other child is the right child.)

Given the orientation of each node, we can define whether a boundary
vertex of a node is to the left, in the middle, or to the right. For
internal nodes, a boundary vertex is in the \emph{middle} if it is the
central vertex of the node. Otherwise, a boundary vertex is to the
\emph{right} if it comes from the right child, and to the \emph{left} if it comes
from the left child. The \emph{leftmost boundary vertex} is then defined as
the left boundary vertex if one exists, otherwise the middle boundary vertex if one
exists. If the node only has a right boundary vertex, then we say that
it has no leftmost boundary vertex. The definition of rightmost is
analogous.\todo{JH: How about calling them \emph{left-or-middle} boundary vertices instead? Easy to understand, and clearly not a right boundary vertex.}

For leaf nodes, the left boundary vertex is the left endpoint and the right
boundary vertex is the right endpoint, if those vertices are boundary vertices
of the edge. Leaf nodes never have a central or middle vertex.

Having \emph{some} convention for what orientation is chosen for each node helps reduce the number of cases that need to be considered, both for the operations explored in this paper, and for any augmentations that need to maintain additional information as part of the stored summaries.  In this paper we require the orientations in the top tree to satisfy the following
\emph{orientation invariant}: For any internal node, the leftmost
boundary vertex of the right child and the rightmost boundary vertex of the left
child must both exist and be equal to the central vertex of the node.

The invariant determines the orientation of every internal non-root node relative to the orientation of its parent, except for point clusters whose children are both point clusters.

A larger example of a top tree that satisfies the orientation invariant can be
found in Figure~\ref{fig:rootexample}. Note that \(b_5\) has its only boundary
vertex in the middle, so the subtree rooted at \(b_5\) could be mirrored without
breaking the invariant.
To maintain the orientation invariant as the structure changes, we sometimes
have to \emph{flip} the orientation of a whole subtree.

\begin{figure}
	\centering
	\begin{subfigure}{0.5\textwidth}
		\centering
		\begin{tikzpicture}[xscale=0.8, yscale=0.625, every node/.style={circle,inner sep=1pt}]
		\begin{scope}
		\node (c0) at (1,0) {\(c_0\)};
		\node (c1) at (0,0) {\(c_1\)};
		\node (c2) at (1.5,{1*sqrt(0.75)}) {\(c_2\)};
		\node (c3) at (0,{2*sqrt(0.75)}) {\(c_3\)};
		\node (c4) at (1.5,{3*sqrt(0.75)}) {\(c_4\)};
		\node (c5) at (2,{4*sqrt(0.75)}) {\(c_5\)};
		\node (c6) at (0.5,{5*sqrt(0.75)}) {\(c_6\)};
		\node (b1) at (0.5,{1*sqrt(0.75)}) {\(b_1\)};
		\node (b2) at (1,{2*sqrt(0.75)}) {\(b_2\)};
		\node (b3) at (0.5,{3*sqrt(0.75)}) {\(b_3\)};
		\node (b4) at (1,{4*sqrt(0.75)}) {\(b_4\)};
		\node (b5) at (1.5,{5*sqrt(0.75)}) {\(b_5\)};
		\node (b6) at (1,{6*sqrt(0.75)}) {\(b_6\)};
		\end{scope}
		
		\begin{scope}[every edge/.style={draw=black,very thick},]
		\path (c0) edge (b1);
		\path (c1) edge (b1);
		\path (c2) edge (b2);
		\path (c3) edge (b3);
		\path (c4) edge (b4);
		\path (c5) edge (b5);
		\path (c6) edge (b6);
		\path (b1) edge (b2);
		\path (b2) edge (b3);
		\path (b3) edge (b4);
		\path (b4) edge (b5);
		\path (b5) edge (b6);
		\end{scope}
		\end{tikzpicture}
		\caption{The nodes as they appear in the top tree.}
	\end{subfigure}%
	\begin{subfigure}{0.5\textwidth}
		\centering
		\begin{tikzpicture}
		\begin{scope}[every node/.style={scale=0.6}]
		\node[circle,draw] (v1) at (0,0) {};
		\node[circle,draw] (v2) at (1,0) {};
		\node[circle,draw] (v3) at (1,1) {};
		\node[circle,draw] (v4) at (2,0) {};
		\node[circle,draw] (v5) at (1,2) {};
		\node[circle,draw] (v6) at (3,0) {};
		\node[circle,draw] (v7) at (3,1) {};
		\node[circle,draw] (v8) at (4,0) {};
		\end{scope}
		\begin{scope}[every edge/.style={draw,thick},]
		\path (v1) edge node[below] {\(c_0\)} (v2);
		\path (v2) edge node[left] {\(c_1\)} (v3);
		\path (v2) edge node[below] {\(c_2\)} (v4);
		\path (v3) edge node[left] {\(c_3\)} (v5);
		\path (v4) edge node[below] {\(c_4\)} (v6);
		\path (v6) edge node[left] {\(c_5\)} (v7);
		\path (v6) edge node[below] {\(c_6\)} (v8);
		\end{scope}
		\end{tikzpicture}
		\caption{The underlying tree.}
	\end{subfigure}
  \caption{An example of a top tree that satisfies the orientation invariant.}
	\label{fig:rootexample}
\end{figure}
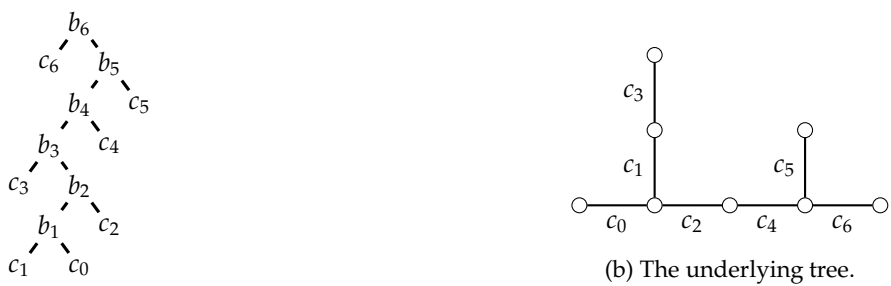

\subsection{Internal operations}

To support the external operations (\texttt{link}, \texttt{cut},
\texttt{expose}, \texttt{deexpose}) we introduce a collection of \emph{internal
operations}. Some of these are probably not directly useful for applications
using the top tree, but form the basis for implementing the external operations.
\begin{description}
\item[is\_point\((\mathtt{node})\), is\_path\((\mathtt{node})\)]
  Returns whether \texttt{node} is a point cluster or a path cluster.

\item[flip\((\mathtt{node})\)] To maintain the orientation
  invariant as the structure changes, we sometimes have to \emph{flip}
  the orientation of a whole subtree.  Flipping can be done by reversing the
  left-to-right order in every node of the subtree, but for efficiency we need to do this in a lazy
  fashion. The idea is to just store a boolean \emph{flip bit} in each
  node, indicating whether or not to conceptually flip the whole
  subtree rooted at that node.  This operation therefore just inverts
  that bit, which trivially takes constant time.  Since it is literally just
  that, our pseudocode manipulates that bit directly rather than calling a
  function to do it.

\item[push\_flip\((\mathtt{node})\)] This operation ensures that the
  flip bit of \texttt{node} is \texttt{false}, by actually swapping and flipping
  the children of \texttt{node} if necessary.  This also takes
  constant time.

  Since all our logic for the rest of the internal operations is
  symmetric, calling \texttt{push\_flip} on the constant number of
  nodes where the relative orientation is relevant is sufficient to
  ensure the part of the top tree we are working on is stored in a way
  that we can ignore the flip bits in much of the rest of the logic.

\item[has\_left\_boundary\((\mathtt{node})\),
  has\_middle\_boundary\((\mathtt{node})\),
  has\_right\_boundary\((\mathtt{node})\)]\leavevmode\\ Returns whether
  \texttt{node} has a left, middle, or right boundary vertex
  respectively. These functions can be implemented in constant time if
  we know the number of boundary vertices of each cluster. In
  practice, it is useful to ignore the flip bits of the proper
  ancestors of \texttt{node}, but not the flip bit on \texttt{node}
  itself when defining what left and right means for these functions.
  In other words, the result is as if \texttt{push\_flip} had been
  called on \texttt{node} first, so it is consistent with the
  orientation of the parent.

\item[rotate\_up\((\mathtt{node})\)] This is similar to the
  \texttt{rotate} operation known from binary search trees.  See
  Figure~\ref{fig:rotate-up} for an illustration.

  There are two major differences between \texttt{rotate\_up} and rotations in
  binary search trees. One is that rotations are not always allowed. The
  \texttt{rotate\_up} function is only allowed if \(\mathtt{sibling(node)}
  \cup \mathtt{sibling(parent(node))}\) is a valid cluster. The other
  difference is that it does not always preserve the ordering of
  the leaves.

  The operation can be defined in the following way: Swap the parent
  pointers of \texttt{node} and \texttt{sibling(parent(node))}, then adjust
  child pointers, orientations, and any other remaining fields in a way such
  that all invariants are satisfied. It can be shown that this is always
  possible if \(\mathtt{sibling(node)} \cup \mathtt{sibling(parent(node))}\)
  is a valid cluster.


\begin{figure}[h]
\begin{subfigure}{0.20\textwidth}
	\begin{tikzpicture}[scale=0.75]
		\begin{scope}
			\node (c0) at (2,0) {$A$};
			\node (c1) at (4,0) {$C$};
			\node (c2) at (3,{1*sqrt(3)}) {$A\cup C$};
		\node (b1) at (1,{1*sqrt(3)}) {$B$};
	\node (b2) at (2,{2*sqrt(3)}) {$A\cup B\cup C$};
\end{scope}

\begin{scope}[every edge/.style={draw=black,very thick},]
	\path (c0) edge (c2);
	\path (c1) edge (c2);
	\path (c2) edge (b2);
	\path (b1) edge (b2);
\end{scope}
\end{tikzpicture}
\end{subfigure}
\begin{subfigure}{0.16\textwidth}
  \begin{tikzpicture}[
      scale=0.75,
    ]
    \useasboundingbox (0,-3) rectangle (3,2);
    \draw[draw=purple!80!brown,thick,->] (3,0) to[bend right]  node[above,text=purple!80!brown] {\(\operatorname{rotate\_up}(B)\)} (0,0);
  \end{tikzpicture}
\end{subfigure}
	\begin{subfigure}{0.21\textwidth}
\begin{tikzpicture}[scale=0.75]
	\begin{scope}
		\node (c0) at (2,0) {$B$};
		\node (c1) at (0,0) {$A$};
		\node (c2) at (3,{1*sqrt(3)}) {$C$};
		\node (b1) at (1,{1*sqrt(3)}) {$A\cup B$};
		\node (b2) at (2,{2*sqrt(3)}) {$A\cup B\cup C$};
	\end{scope}
	
	\begin{scope}[every edge/.style={draw=black,very thick},]
		\path (c0) edge (b1);
		\path (c1) edge (b1);
		\path (c2) edge (b2);
		\path (b1) edge (b2);
	\end{scope}
\end{tikzpicture}
\end{subfigure}
\begin{subfigure}{0.18\textwidth}
  \begin{tikzpicture}[
      scale=0.75,
    ]
    \useasboundingbox (0,-3) rectangle (3,2);
    \draw[draw=cyan!70!black,thick,->] (0,0) to[bend left]  node[above,text=cyan!70!black] {\(\operatorname{rotate\_up}(A)\)} (3,0);
  \end{tikzpicture}
\end{subfigure}
\begin{subfigure}{0.23\textwidth}
\begin{tikzpicture}[scale=0.75]
	\begin{scope}
		\node (c0) at (2,0) {$B$};
		\node (c1) at (4,0) {$C$};
		\node (c2) at (3,{1*sqrt(3)}) {$B\cup C$};
		\node (b1) at (1,{1*sqrt(3)}) {$A$};
		\node (b2) at (2,{2*sqrt(3)}) {$A\cup B\cup C$};
	\end{scope}
	
	\begin{scope}[every edge/.style={draw=black,very thick},]
		\path (c0) edge (c2);
		\path (c1) edge (c2);
		\path (c2) edge (b2);
		\path (b1) edge (b2);
	\end{scope}
\end{tikzpicture}
\end{subfigure}
\caption{\label{fig:rotate-up}A top tree containing clusters $A$, $B$, and $C$ (middle), and the result of calling either \textcolor{purple!80!brown}{$\operatorname{rotate\_up}(B)$} (left) or \textcolor{cyan!70!black}{$\operatorname{rotate\_up}(A)$} (right).
Note here that in binary search trees, one typically calls the rotate operation on the node \((A\cup B)\), whereas in our case, that would be ambiguous. Instead, we define $\operatorname{rotate\_up}$ to take the child of \((A\cup B)\) whose depth is reduced.}
\end{figure}
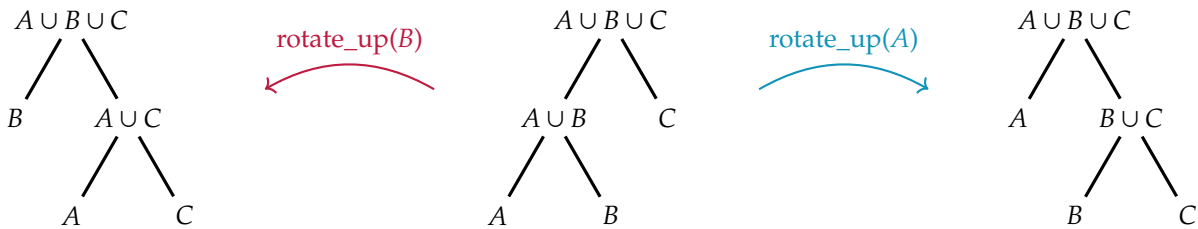

\item[semi\_splay\_step\((\mathtt{node})\)] This finds and executes one or
  two \texttt{rotate\_up} operations in the top tree that are valid and
  together reduce the depth of \texttt{node} by one. It returns the
  root of the changed subtree.

  This is really the core of our new algorithm. We prove that if the
  depth of \texttt{node} is at least five, this operation always
  succeeds and only needs to look at a small constant number of nodes.

  Conceptually, the semi-splay step serves the purpose that we use
  rotations for in binary search trees: decreasing the depth of the
  given node. We cannot use \texttt{rotate\_up} for that directly because we
  aren't always allowed to call that function. However,
  \texttt{rotate\_up} is allowed sufficiently often that we can still
  implement splay trees, and the semi-splay step is best thought of as a
  procedure that searches for such a rotation --- understanding the
  precise details of how that happens is not important for
  understanding the rest of the paper.

  This immediately suggests a na\"{\i}ve algorithm similar to the
  rotate-to-root algorithm by Allen et
  al.\cite{AllenMunro78rotatetoroot} for binary search trees that was
  a precursor to splay trees. Simply call \texttt{semi\_splay\_step}
  repeatedly on \texttt{node} until you get constant depth.  This,
  however does not give the desired amortized \(O(\log n)\) running
  time for our external operations.

\item[semi\_splay\((\mathtt{node})\)] This uses \texttt{semi\_splay\_step}
  repeatedly but in a slightly less na\"{\i}ve way.  The amortized
  cost of this operation is \(O(\log
  n)-\Omega(\depth(\mathtt{node}))\), which means that it can be used to
  pay for other operations whose natural cost is
  \(O(\depth(\mathtt{node}))\).

  This also guarantees that the depth of \texttt{node} is reduced by a
  constant factor, and is similar to the balancing operation used in
  semi-splay trees.

\item[full\_splay\((\mathtt{node})\)] This also uses
  \texttt{semi\_splay\_step} repeatedly. In addition to the guarantees
  that \texttt{semi\_splay} provides, this method also guarantees that
  \texttt{node} is moved to have depth at most $4$.

  It does this by considering two \texttt{semi\_splay\_step} calls at a
  time, and is similar to the logic in splay trees where we look for
  either a Zig, a Zig-Zig, or a Zig-Zag step.

\item[find\_consuming\_node\((v)\)] returns the \emph{consuming node}
  of a vertex, defined as the lowest common ancestor in the top tree
  of all edges incident to the vertex. More details and
  pseudocode can be found in Section~\ref{sec:consuming}. Note that
  although the query itself does not require any changes to the tree,
  our implementation will perform a semi-splay in the tree first to
  make the amortization work. This is similar to a search in a
  standard splay tree, where you also need to do a splay during or after
  each search.

\end{description}

\subsection{Data structure}
We store the top tree as a rooted binary tree. For each node, we store the following information:
\begin{itemize}
  \item A pointer to the parent node, or a null pointer if it has no parent.
  \item For internal nodes, a left and right child pointer. Neither pointer may
    be null.
  \item For leaf nodes, a vertex id or pointer for the left and right endpoints
    of the edge. (Or a pointer to the edge, which in turn stores the left and
    right endpoints.)
  \item A counter storing the number of boundary vertices of the node. This can
    be zero, one or two.
  \item Each node stores a flip bit for the subtree rooted at that
    node. It represents whether the subtree rooted at the node
    conceptually needs to be flipped relative to its parent.
\end{itemize}
It is also necessary to store the underlying tree itself. This can be done by
storing the edges adjacent to each vertex in a linked list. The necessary
operations on the underlying tree are: insert/delete edge, a way to get any edge
incident to a given vertex, a way to get the endpoints of an edge, and a way to
determine whether the degree of a vertex is ``zero or one'' or ``at least
two''. (We have queries like \texttt{degree(v) >= 2}, but we do not actually
need the precise number. This is convenient when storing the edges incident to a
vertex in a linked list, as we do not need a separate counter for its length.)

\subsection{User data}

Most applications of top trees require that you store some kind of user-data in
the vertices or edges in the underlying tree, and summaries based on that user-data in each cluster of the top tree. Normally, this
user-data must be computed ``bottom-up'', and the user-data usually doesn't
support changes in the middle of the top tree (e.g.\ rotations). However, it
turns out that this is not actually a problem for our operations, because they
always operate on the entire root path, and not only locally in the middle of the top
tree. This means that when modifying the top tree, you can first destroy the
user-data on the nodes of the root path, then you can run the algorithm from this paper, and
then you can rebuild the user-data on nodes of the new root path. Rebuilding can
be done either at the end of each operation, or deferred until the data is
actually needed. Either way, the amortized number of times node user-data must
be (re)computed stays \(O(\log n)\).

\section{Detecting boundary vertex positions}
\label{sec:has-x-boundary}

Our other internal operations need to determine whether the boundary vertices
are to the left, middle, or to the right. This can be done by looking at how
many boundary vertices the node and its children have. The operations can be
implemented as follows:

\pagebreak[3]

\begin{verbatim}
fn has_middle_boundary(node) {
    if is_leaf_node(node) || node.num_boundary_vertices == 0 {
        return false;
    } else {
        return node.num_boundary_vertices
            - is_path(node.children[0])
            - is_path(node.children[1]);
    }
}

fn has_left_boundary(node) {
    if is_leaf_node(node) {
        endpoint = node.endpoints[node.flip];
        return endpoint.exposed || degree(endpoint) >= 2;
    } else {
        return is_path(node.children[node.flip]);
    }
}

fn has_right_boundary(node) {
    if is_leaf_node(node) {
        endpoint = node.endpoints[!node.flip];
        return endpoint.exposed || degree(endpoint) >= 2;
    } else {
        return is_path(node.children[!node.flip]);
    }
}
\end{verbatim}
Checking whether a node has a left boundary vertex can be done by checking
whether the appropriate child is a path cluster because if the child is a point
cluster, then its only boundary vertex is the central vertex of our node, and if
the child is a path cluster, then it has a boundary vertex different from the
central vertex, and that extra boundary vertex is also a boundary vertex of the
parent. The code for \texttt{has\_middle\_boundary} works because
\(\mathtt{left} + \mathtt{middle} + \mathtt{right}\) is the total number of
boundary vertices. (Using the convention that booleans can be used
interchangeably with the integers zero and one.)

\section{Rotations}\label{sec:rotate}

The rotation is one of the basic operations that the top tree algorithms are
built on. To move a given node up, we swap it with its parent's sibling. The
rotate\_up operation is allowed if and only if \(\mathtt{sibling(node)} \cup
\mathtt{sibling(parent(node))}\) is a valid cluster. There are no guarantees
about which orientations the nodes are given, except that they satisfy the
orientation invariant.

Pseudocode for rotate\_up follows below:

\begin{verbatim}
fn rotate_up(node) {
    parent = node.parent
    grandparent = parent.parent
    sibling = sibling(node)
    uncle = sibling(parent)

    push_flip(grandparent)
    push_flip(parent)

    uncle_is_left_child = grandparent.children[0] == uncle
    sibling_is_left_child = parent.children[0] == sibling
    to_same_side = uncle_is_left_child == sibling_is_left_child
    sibling_is_path = is_path(sibling)
    uncle_is_path = is_path(uncle)
    gp_is_path = is_path(grandparent)

    if to_same_side && sibling_is_path {
        // Rotation on path.
        gp_middle = has_middle_boundary(grandparent)
        new_parent_is_path = gp_middle || uncle_is_path
        flip_new_parent = false
        flip_grandparent = false
        if gp_middle && !gp_is_path {
            ggp = grandparent.parent
            if ggp is not null {
                gp_is_left_child = ggp.children[0] == grandparent
                flip_grandparent = gp_is_left_child == uncle_is_left_child
            }
        }
    } else {
        // Rotation on star.
        if !to_same_side {
            new_parent_is_path = sibling_is_path || uncle_is_path
            flip_new_parent = sibling_is_path
            flip_grandparent = sibling_is_path
            node.flip = !node.flip
        } else {
            new_parent_is_path = uncle_is_path
            flip_new_parent = false
            flip_grandparent = false
            sibling.flip = !sibling.flip
        }
    }

    parent.children[uncle_is_left_child] = sibling
    parent.children[!uncle_is_left_child] = uncle
    parent.flip = flip_new_parent
    parent.boundary = if new_parent_is_path { 2 } else { 1 }

    grandparent.children[uncle_is_left_child] = node
    grandparent.children[!uncle_is_left_child] = parent
    grandparent.flip = flip_grandparent

    node.parent = grandparent
    uncle.parent = parent
}
\end{verbatim}

When performing a rotation, the underlying tree can look in one of two ways as
illustrated in Figure~\ref{fig:rotatetree}. Additionally, the orientations of
the top tree can appear with \texttt{node} and \texttt{sibling} to the same or
opposite sides as illustrated in Figure~\ref{fig:rotatetoptree}. Using this, we
can analyze the implementation of rotate\_up on a case-by-case basis.

First, we should argue that \texttt{to\_same\_sides
\&\& sibling\_is\_path} is true if and only if we are rotating around a path. If we
assume that we are rotating around a path, then \texttt{sibling} must be a path
cluster, and it follows from the orientation invariant that we are in the case
of Figure~\ref{fig:rotatesameside}. If we assume that we are rotating around a
star and \texttt{sibling} is a path cluster, then we can't be in case
Figure~\ref{fig:rotatesameside} because then \texttt{node} and \texttt{uncle}
would need to connect to opposite sides of \texttt{sibling} due to the orientation
invariant.

We now show correctness for rotations around a path like in
Figure~\ref{fig:rotatepath}. We know that we are in
Figure~\ref{fig:rotatesameside}, and the orientations are updated such that
\texttt{node}, \texttt{sibling}, and \texttt{uncle} appear in the same order
before and after the rotation. This means that the orientations match without
having to flip \texttt{node}, \texttt{sibling}, \texttt{uncle}, or
\texttt{parent}. The line that computes \texttt{new\_parent\_is\_path} needs to be true when
\(\{\mathtt{sibling}, \mathtt{uncle}\}\) has two boundary vertices. That cluster
always has the boundary vertex between \texttt{sibling} and \texttt{node}, so we
just need to check if one of the two other vertices are boundary vertices. The
vertex between \texttt{sibling} and \texttt{uncle} is a boundary vertex of the new
\texttt{parent} if and only if it is a boundary vertex of the old
\texttt{grandparent}, and it is the central vertex of the old
\texttt{grandparent}, so we can check that by asking whether
\texttt{grandparent} had a middle boundary vertex. The vertex at the other end
of \texttt{uncle} is a boundary of \(\{\mathtt{sibling}, \mathtt{uncle}\}\)
whenever it is a boundary vertex of \texttt{uncle}. We can check this by asking
whether \texttt{uncle} has two boundary vertices.

The code that looks at \texttt{ggp} checks for the situation where
\(\{\mathtt{node}, \mathtt{sibling}, \mathtt{uncle}\}\) has a single boundary
vertex in the middle before the rotation, and a single boundary vertex to the
left or right after the rotation. If the new boundary vertex is to the right,
then the grandparent no longer has a leftmost boundary vertex, so if the
grandparent is a right child, we need to flip it to avoid breaking the
orientation invariant at \texttt{ggp}.

We now consider rotations on a star like in Figure~\ref{fig:rotatestar}. We
first consider the case in Figure~\ref{fig:rotatesameside}, where \texttt{sibling}
must be a point cluster to avoid violating the orientation invariant. Here, we
put the nodes together such that \texttt{node}, \texttt{sibling}, and
\texttt{uncle} appear in the same order before and after the rotation. We flip
\texttt{sibling} because it changes between being a left and right child. (The flip
isn't necessary when \texttt{sibling} has the boundary vertex in the middle, but it
is also not incorrect to flip in this case.) The new cluster is a path cluster
exactly when \texttt{uncle} is a path cluster since we know that \texttt{sibling}
is a point cluster. The orientation of the grandparent remains correct since the
rotation does not change which vertex is to the left, in the middle, and to the
right.

\begin{figure}
  \begin{subfigure}{0.6\textwidth}
    \begin{tikzpicture}
      \begin{scope}
        \node[circle,draw] (v1) at (0,0) {};
        \node[circle,draw] (v2) at (2,1) {};
        \node[circle,draw] (v3) at (4,0) {};
        \node[circle,draw] (v4) at (6,1) {};
      \end{scope}
      \begin{scope}[every edge/.style={draw}]
        \path (v1) edge node[sloped,above] {\texttt{node}} (v2);
        \path (v2) edge node[sloped,above] {\texttt{sibling}} (v3);
        \path (v3) edge node[sloped,above] {\texttt{uncle}} (v4);
      \end{scope}
    \end{tikzpicture}
    \caption{Rotation on a path.}
    \label{fig:rotatepath}
  \end{subfigure}%
  \begin{subfigure}{0.4\textwidth}
    \begin{tikzpicture}
      \begin{scope}
        \node[circle,draw] (v1) at (0,0) {};
        \node[circle,draw] (v2) at (2,1) {};
        \node[circle,draw] (v3) at (4,0) {};
        \node[circle,draw] (v4) at (2,{1+sqrt(5)}) {};
      \end{scope}
      \begin{scope}[every edge/.style={draw}]
        \path (v1) edge node[sloped,above] {\texttt{sibling}} (v2);
        \path (v2) edge node[sloped,above] {\texttt{node}} (v3);
        \path (v2) edge node[sloped,above] {\texttt{uncle}} (v4);
      \end{scope}
    \end{tikzpicture}
    \caption{Rotation on a star.}
    \label{fig:rotatestar}
  \end{subfigure}
  \caption{The two possible configurations in the underlying tree of the
  clusters involved in a rotation. Each cluster is drawn as an edge. The labels
  on the edges refers to variables in the pseudocode for \texttt{rotate\_up}.}
  \label{fig:rotatetree}
\end{figure}
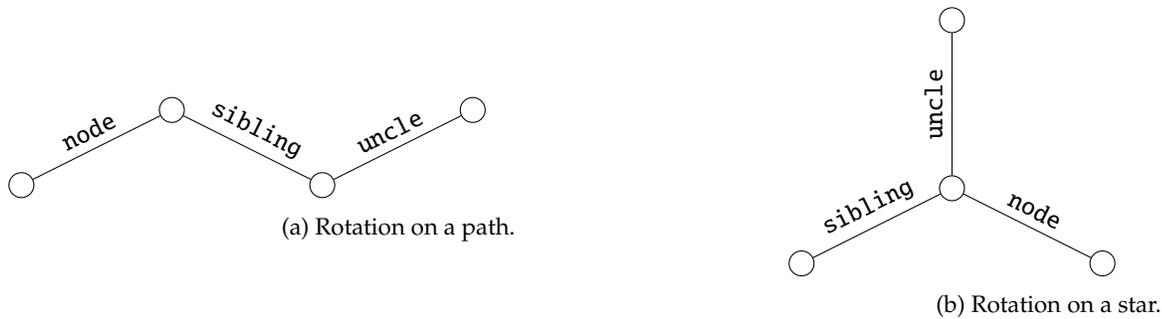

\begin{figure}
  \begin{subfigure}{0.5\textwidth}
    \begin{tikzpicture}
      \begin{scope}
        \node (c0) at (2,0) {\texttt{sibling}};
        \node (c1) at (0,0) {\texttt{node}};
        \node (c2) at (3,{1*sqrt(3)}) {\texttt{uncle}};
        \node (b1) at (1,{1*sqrt(3)}) {\texttt{parent}};
        \node (b2) at (2,{2*sqrt(3)}) {\texttt{grandparent}};
      \end{scope}

      \begin{scope}[every edge/.style={draw=black,very thick},]
        \path (c0) edge (b1);
        \path (c1) edge (b1);
        \path (c2) edge (b2);
        \path (b1) edge (b2);
      \end{scope}
    \end{tikzpicture}
    \caption{\texttt{sibling} and \texttt{uncle} to the same side.}
    \label{fig:rotatesameside}
  \end{subfigure}%
  \begin{subfigure}{0.5\textwidth}
    \begin{tikzpicture}
      \begin{scope}
        \node (c0) at (2,0) {\texttt{node}};
        \node (c1) at (0,0) {\texttt{sibling}};
        \node (c2) at (3,{1*sqrt(3)}) {\texttt{uncle}};
        \node (b1) at (1,{1*sqrt(3)}) {\texttt{parent}};
        \node (b2) at (2,{2*sqrt(3)}) {\texttt{grandparent}};
      \end{scope}

      \begin{scope}[every edge/.style={draw=black,very thick},]
        \path (c0) edge (b1);
        \path (c1) edge (b1);
        \path (c2) edge (b2);
        \path (b1) edge (b2);
      \end{scope}
    \end{tikzpicture}
    \caption{\texttt{sibling} and \texttt{uncle} to different sides.}
    \label{fig:rotateopposide}
  \end{subfigure}
  \caption{The two possible configurations in the top tree of \texttt{sibling} and
  \texttt{uncle} in relation to each other (ignoring reflections). The labels on
  the edges refers to variables in the pseudocode for \texttt{rotate\_up}.}
  \label{fig:rotatetoptree}
\end{figure}
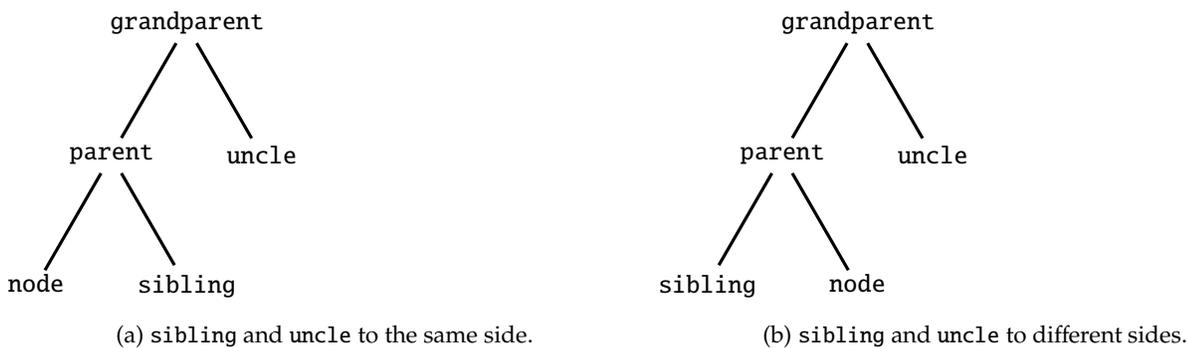

Next, we consider rotations on a star like in
Figure~\ref{fig:rotateopposide} with \texttt{sibling} being a point cluster. Here,
\texttt{node} must be a point cluster to avoid violating the orientation
invariant. The only two possible boundary vertices of the new \texttt{parent}
and \texttt{grandparent} are the two endpoints of \texttt{uncle}. Thus, the new
\texttt{parent} is a path cluster when \texttt{uncle} is. To avoid breaking the
orientation invariant above \texttt{grandparent}, we note that the code puts the
tree back together such that the two endpoints of \texttt{uncle} do not change
positions in \texttt{grandparent}.

The last case to consider is Figure~\ref{fig:rotateopposide} with \texttt{sibling}
being a path cluster. Then, both \texttt{node} and \texttt{uncle} must be
point clusters, because otherwise we have a cluster with three boundary vertices
before or after the rotation. Thus, the only two possible boundary vertices of
the new \texttt{parent} and \texttt{grandparent} are the two endpoints of
\texttt{sibling}. Thus, \texttt{parent} is a path cluster when \texttt{sibling} is. To
avoid breaking the orientation invariant above \texttt{grandparent}, we note
that the code puts the tree back together such that the two endpoints of
\texttt{sibling} do not change positions in \texttt{grandparent}.

The above analysis was based on the two cases in Figure~\ref{fig:rotatetoptree},
but the two cases can also appear flipped. The algorithm also works in those
cases because the code only cares about whether things hang off to the same or
different sides, and not which side is left or right, so it treats the flipped
cases identically.

\subsection{When are rotations valid?}
\label{sec:rotvalid}

The exact condition for when rotates are allowed is that
\(\mathtt{sibling(node)} \cup \mathtt{sibling(parent(node))}\) is a valid
cluster, but proving this directly every time we wish to make a rotation quickly
becomes quite cumbersome. For this reason, we provide two lemmas that provide
some cases in which rotations are allowed.

We note that since \texttt{rotate\_up} may modify the orientations in any way it
wants to as long as the invariants are satisfied, there are multiple correct
ways to implement it. However, the lemmas below do not look inside our
implementation, so they hold for any correct implementation of
\texttt{rotate\_up}.

However, we first prove some helper lemmas.
\begin{lemma}
  \label{lem:invalid-iff}
  If \(a\) and \(b\) are valid clusters of a top tree whose intersection is a
  single vertex \(v\), then the cluster \(a \cup b\) is invalid if and only if
  the following three conditions hold: \(a\) is a path cluster, \(b\) is a path
  cluster, \(v\) is a boundary vertex of \(a \cup b\).
\end{lemma}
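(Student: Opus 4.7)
The plan is to prove the lemma by directly counting the boundary vertices of $a \cup b$ in terms of those of $a$ and $b$. The crucial preliminary observation is that because the intersection of $a$ and $b$ (as vertex sets of their edges) is exactly $\{v\}$, the vertex $v$ is incident to at least one edge of $a$ and at least one edge of $b$. Since an edge of $b$ is outside $a$, this immediately makes $v$ a boundary vertex of $a$, and symmetrically $v$ is a boundary vertex of $b$.

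Next I would classify the boundary vertices of $a \cup b$ into those equal to $v$ and those different from $v$. A vertex $w \neq v$ incident to an edge of $a \cup b$ lies in exactly one of $a, b$ (say $a$); an edge outside $a \cup b$ incident to $w$ is also outside $a$, and conversely an edge outside $a$ incident to $w$ is not in $b$ because $w \notin b$. Together with the exposed condition (which does not depend on the cluster), this shows that the boundary vertices of $a \cup b$ other than $v$ are exactly the boundary vertices of $a$ other than $v$ together with those of $b$ other than $v$, and these two sets are disjoint. Letting $k_a$ and $k_b$ denote the number of boundary vertices of $a$ and $b$ respectively that are not $v$, the total boundary count of $a \cup b$ equals $k_a + k_b + [v \text{ is a boundary of } a \cup b]$.

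Using that $v$ is a boundary vertex of both $a$ and $b$, validity of $a$ gives $k_a \in \{0,1\}$, with $k_a = 0$ iff $a$ is a point cluster and $k_a = 1$ iff $a$ is a path cluster; symmetrically for $b$. So the boundary count of $a \cup b$ is at least $3$ if and only if $k_a = k_b = 1$ and $v$ is a boundary of $a \cup b$, which is exactly the conjunction of the three listed conditions. The only subtlety — and the only place where the argument could go wrong — is the initial step: one must exploit that $v$ belongs to \emph{both} clusters to conclude $v$ is a boundary vertex of each, and then be careful that the inside/outside classification of edges at a vertex $w \neq v$ transfers correctly between $a$ (or $b$) and $a \cup b$. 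Once those two bookkeeping facts are pinned down, the statement reduces to a one-line case analysis on $k_a, k_b$.
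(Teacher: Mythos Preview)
Your proof is correct and follows essentially the same approach as the paper: both arguments count the boundary vertices of $a \cup b$ as the union of the boundary vertices of $a$ and $b$, with the shared vertex $v$ possibly removed. The paper simply invokes this as a previously stated fact (``the boundary vertices of a node is the union of the boundary vertices of its children, with or without the central vertex removed''), whereas you derive it carefully from the definition; the counting that follows is identical.
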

\begin{proof}
  We have already seen that the boundary vertices of a cluster is the union of
  the boundary vertices of its children, possibly with the central vertex
  removed. Since \(a\) and \(b\) are valid and share \(v\), the union of the
  boundary vertices of the children has three elements if and only if both \(a\)
  and \(b\) are path clusters. The central vertex of \(a \cup b\) is \(v\). The
  lemma follows from this.
\end{proof}

\begin{lemma}
  \label{lem:point-boundary}
  If \(a,b,c\) are valid clusters with \(c\) the parent of \(a\) and \(b\), and
  if \(a\) is a point cluster, then all boundary vertices of \(c\) are also
  boundary vertices of \(b\).
\end{lemma}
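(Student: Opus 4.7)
The plan is to show that the boundary vertices of $a$ form a subset of those of $b$, after which the union-and-possibly-remove-central characterization of the boundary vertices of $c$ yields the claim immediately.

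First, I would identify the central vertex $v$ of $c$ and argue that $v$ is a boundary vertex of both $a$ and $b$. Since $a$ and $b$ are disjoint edge sets whose union is $c$, the only shared vertex $v$ must be incident to at least one edge of each. Hence $v$ is incident to an edge in $a$ and also to an edge outside $a$ (namely some edge of $b$), so by definition $v$ is a boundary vertex of $a$; by symmetry it is also a boundary vertex of $b$.

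Next, since $a$ is a point cluster it has at most one boundary vertex, and we have just shown it has at least one, so $v$ is the unique boundary vertex of $a$. In particular, every boundary vertex of $a$ is also a boundary vertex of $b$. Using the fact (stated in the excerpt and re-used in Lemma~\ref{lem:invalid-iff}) that the boundary vertices of $c$ are the union of the boundary vertices of its children, possibly with the central vertex removed, we get
\[
\operatorname{bdry}(c) \subseteq \operatorname{bdry}(a) \cup \operatorname{bdry}(b) = \operatorname{bdry}(b),
\]
which is exactly the statement of the lemma.

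I do not anticipate a real obstacle here: the argument is a direct unwinding of the definitions of boundary vertex, central vertex, and point cluster. The only subtle point is the very first step, checking that the shared vertex $v$ really is a boundary vertex of each child; this requires noticing that each child contributes at least one edge incident to $v$, which is immediate from $a$ and $b$ being edge-disjoint with $v$ as their unique common vertex.
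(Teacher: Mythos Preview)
Your proposal is correct and follows essentially the same approach as the paper's proof: both argue that the unique boundary vertex of the point cluster $a$ is the vertex it shares with its sibling $b$, hence $\operatorname{bdry}(a)\subseteq\operatorname{bdry}(b)$, and then invoke the union characterization of $\operatorname{bdry}(c)$. Your version is simply more explicit, spelling out why the central vertex is indeed a boundary vertex of each child, whereas the paper asserts this in a single clause.
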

\begin{proof}
  All boundary vertices of \(c\) are either boundary vertices of \(a\) or \(b\).
  However, all boundary vertices of \(a\) are also boundary vertices of \(b\)
  because the only boundary vertex of a point cluster is the one it shares with
  its sibling.
\end{proof}

In the following we define the phrase ``\(a\) and \(b\) hang off to the same
side'' as ``both \(a\) and \(b\) are left children, or both are right
children''. The phrase ``\(a\) and \(b\) hang off to opposite sides'' will be
its negation.
\begin{lemma}
  \label{lem:path-not-con}
  Let \(x,y,z,a,b\) be valid clusters in a top tree with \(y\) the parent of
  \(x,a\) and \(z\) the parent of \(y,b\). If \(a\) is a path cluster and \(x,y\)
  hang off to the same side, then \(x \cup b\) is not a connected set of edges.
\end{lemma}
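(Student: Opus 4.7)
The plan is to argue that the central vertex of $z$, which is the only vertex at which $x$ could possibly ``touch'' $b$, actually lies in $a$ and not in $x$; once this is established the edge sets $x$ and $b$ share no vertex, and since both are nonempty the union is disconnected.

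First I would set up the picture using the orientation invariant. By symmetry (the whole setup can be mirrored via flip bits) we may assume both $x$ and $y$ are left children, so $a$ is the right child of $y$ and $b$ is the right child of $z$. Applying the orientation invariant to $y$, the central vertex $c_y$ of $y$ is simultaneously the rightmost boundary vertex of $x$ and the leftmost boundary vertex of $a$. Since $a$ is a path cluster, $a$ has two distinct boundary vertices: its left boundary $c_y$ and a right boundary, call it $w$, with $w \neq c_y$. Because $w$ is not the central vertex of $y$, it survives as a boundary vertex of $y$, and indeed it is the rightmost boundary vertex of $y$. Applying the orientation invariant to $z$, the central vertex $c_z$ of $z$ equals the rightmost boundary of $y$, which is $w$.

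Next I would compute $V(x) \cap V(b)$, where $V(\cdot)$ denotes the vertex set of a cluster (the union of endpoints of its edges). Since $x \subseteq y$ and $y \cup b$ forms the valid cluster $z$, the clusters $y$ and $b$ share exactly one vertex, namely $c_z = w$; hence $V(x) \cap V(b) \subseteq \{w\}$. On the other hand, $x$ and $a$ together form the cluster $y$, so $V(x) \cap V(a) = \{c_y\}$. Since $w \in V(a)$ and $w \neq c_y$, we conclude $w \notin V(x)$, so $V(x) \cap V(b) = \emptyset$. Because clusters are nonempty sets of edges, $x \cup b$ is then a disjoint union of two nonempty edge sets with disjoint vertex sets, and in particular it is not connected.

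The main subtlety, and the reason the hypothesis ``$x,y$ hang off to the same side'' is used, is step that identifies $c_z$ with the \emph{non-central} boundary of $a$. If $x$ and $y$ were on opposite sides, then the rightmost boundary of $y$ could instead come from $x$, and the conclusion would fail. Also, the use of ``$a$ is a path cluster'' is essential: it is exactly what guarantees that $a$ contributes a boundary vertex of $y$ distinct from $c_y$, which in turn becomes $c_z$. Aside from the bookkeeping required to handle the mirrored case uniformly, the argument is a direct chase through the orientation invariant, so I expect no real obstacle beyond the careful identification of which vertex plays which role.
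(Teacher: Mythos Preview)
Your proof is correct and follows essentially the same route as the paper: both assume without loss of generality that $a,b$ are right children, use the orientation invariant to identify the central vertex of $y$ (your $c_y$, the paper's $v$) and then the right boundary $w$ of $a$, and conclude that $w$ is the central vertex of $z$. The only minor difference is in the final step: the paper argues by contradiction that a shared vertex $u\in V(x)\cap V(b)$ would yield a cycle $u\to w\to c_y\to u$ in the underlying tree, whereas you directly compute $V(x)\cap V(b)\subseteq V(y)\cap V(b)=\{w\}$ and then rule out $w\in V(x)$ via $V(x)\cap V(a)=\{c_y\}$; these are two phrasings of the same observation.
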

\begin{proof}
  Assume without loss of generality that \(a,b\) are right children. Since \(a\)
  is a path cluster, it has a leftmost boundary vertex \(v\) and a rightmost
  boundary vertex \(w\). By the orientation invariant, \(y\) has central vertex
  \(v\) and right boundary vertex \(w\). Similarly, \(w\) is the central
  vertex of \(z\). It follows that \(x\) has boundary vertex \(v\)
  and that \(b\) has boundary vertex \(w\). If \(x \cup b\) was a connected set
  of edges, then they would share a vertex \(u\), but this is impossible as
  there would be a path from \(u\) to \(w\) through \(b\), then from \(w\) to
  \(v\) through \(a\), then from \(v\) to \(u\) through \(x\). This is a cycle
  in a tree. Thus, \(x \cup b\) is not connected.
\end{proof}

Using these helper lemmas, we can prove our lemmas for when rotations are
allowed. The variable names in the lemmas are defined to match
Figure~\ref{fig:rotvalid}.

\begin{lemma}
  \label{lem:point-rotate}
    Let \(x\) be a valid cluster in a top tree. If \(x\) and its grandparent are
    both point clusters, then it is valid to call
    \(\operatorname{rotate\_up}(x)\).
\end{lemma}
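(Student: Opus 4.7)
The condition for $\operatorname{rotate\_up}(x)$ to be valid is that $a \cup b$ is a valid cluster, where $a := \operatorname{sibling}(x)$ and $b := \operatorname{sibling}(\operatorname{parent}(x))$. Write $y$ for $\operatorname{parent}(x)$, $z$ for $\operatorname{grandparent}(x)$, $v$ for the central vertex of $y$, and $w$ for the central vertex of $z$. The plan is to apply Lemma \ref{lem:invalid-iff}, which characterizes exactly when the union of two valid clusters meeting at a single vertex is invalid.

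First I will establish that $a \cap b = \{w\}$, so that Lemma \ref{lem:invalid-iff} applies and $a \cup b$ is already known to be connected. Since $y = x \cup a$ and $y \cap b = \{w\}$ (two connected edge sets in a tree share at most one vertex), the vertex $w$ lies in $x$ or in $a$. If $w \in x$, then $x$ has both $v$ (shared with $a$) and $w$ (shared with $b$) as boundary vertices, so the point-cluster hypothesis on $x$ forces $v = w$ and hence $w \in a$. Thus $w \in a$ always, and $a \cap b = \{w\}$.

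The heart of the argument is by contradiction. Suppose $a \cup b$ is invalid, so by Lemma \ref{lem:invalid-iff} both $a$ and $b$ are path clusters and $w$ is a boundary vertex of $a \cup b$. I will exhibit two distinct boundary vertices of $z$, contradicting that $z$ is a point cluster. Let $u_b$ be the boundary vertex of $b$ other than $w$; since $b \cap y = \{w\}$ we have $u_b \notin y$, so $u_b$ is not incident to any edge in $y$, and hence its being a boundary of $b$ must be witnessed either by exposure or by an edge outside $z$, making $u_b$ a boundary of $z$. For the second vertex I split on whether $v = w$. If $v \neq w$, then $w \notin x$ (by the previous paragraph's argument), and the assumption that $w$ is a boundary of $a \cup b$, combined with $x$ being the only edge set in $z$ outside $a \cup b$, forces $w$ to be exposed or incident to an edge outside $z$, so $w$ is a boundary of $z$ distinct from $u_b$. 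If $v = w$, let $u_a$ be the boundary vertex of the path cluster $a$ other than $v$; since $u_a$ lies outside both $x$ and $b$, the same reasoning as for $u_b$ puts $u_a$ in the boundary of $z$, and $u_a \neq u_b$ because $u_a \in a \setminus b$.

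The main obstacle is the case analysis on whether the central vertices $v$ and $w$ coincide, together with carefully tracking, for each candidate boundary vertex, which sibling clusters it does and does not belong to — this is what lets us argue that its boundary status must come from outside $z$ rather than being absorbed by an inner sibling. Lemma \ref{lem:point-boundary} is also available as a potential shortcut for propagating boundary vertices through the point cluster $x$, although the direct tree-geometry argument above seems cleanest.
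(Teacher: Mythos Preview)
Your proof is correct. The main difference from the paper's argument is in the validity step: rather than invoking Lemma~\ref{lem:invalid-iff} and splitting on whether $v=w$, the paper uses a short counting argument. Since $x$ is a point cluster, its sole boundary vertex is the one it shares with $a$; hence passing from $a\cup b$ to $z=(a\cup b)\cup x$ can remove at most one boundary vertex (any boundary vertex of $a\cup b$ that fails to be a boundary vertex of $z$ must be a boundary vertex of $x$), so $a\cup b$ has at most one more boundary vertex than the point cluster $z$, i.e.\ at most two. For connectivity the paper invokes Lemma~\ref{lem:point-boundary} directly to place $w$ in $a$, whereas you essentially reprove that lemma in context via the step ``$w\in x\Rightarrow v=w$''. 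Your route through Lemma~\ref{lem:invalid-iff} is more explicit about which two vertices would witness the failure of the point-cluster hypothesis on $z$, but the paper's counting argument is shorter and sidesteps the $v=w$ versus $v\neq w$ case analysis entirely.
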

\begin{proof}
  Let \(y\) be the parent of \(x\) and let \(z\) be the parent of \(y\). Let
  also \(a\) be the sibling of \(x\) and \(b\) be the sibling of \(y\). We need
  to prove that \(a \cup b\) is a valid cluster. To do that, we first argue that
  \(a\) and \(b\) share a vertex \(v\). By Lemma~\ref{lem:point-boundary}, all
  boundary vertices of \(y\) are also boundary vertices of \(a\). However, one
  of the boundary vertices of \(y\) is the one it shares with its sibling \(b\).
  Thus, \(a\) and \(b\) share a boundary vertex.

  It remains to show that \(a \cup b\) is valid. Since \(x \cup a\) already
  exists in the tree, the only boundary vertex in \(x\) must be the one it
  shares with \(a\). Thus, the cluster \((a \cup b) \cup x\) has at most one
  fewer boundary vertices than \(a \cup b\). Since \((a \cup b) \cup x\) already
  exists in the tree as the point cluster \(z\), this means that \(a \cup b\)
  has at most two boundary vertices as desired.
\end{proof}

\begin{lemma}
  \label{lem:path-rotate}
  Let \(x,y,z,a,b\) be valid clusters in a top tree with \(z\) the parent of
  \(y,b\) and \(y\) the parent of \(x,a\). If \(y\) is a path cluster and \(x\)
  hangs off to the same side as \(y\), then it is valid to call
  \(\operatorname{rotate\_up}(x)\).

  If \(z\) is a path cluster and not the root, then \(a \cup b\) hangs off to
  the same side as \(z\) after the rotation if and only if \(b\) hung off to the
  same side as \(z\) before the rotation.

  If \(z\) is a point cluster, then \(a \cup b\) is a point cluster after the
  rotation.
\end{lemma}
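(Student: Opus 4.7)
The plan is to prove all three claims by analyzing the boundary vertices of $a \cup b$ in the underlying tree, with a short inspection of the \texttt{rotate\_up} pseudocode for Part 2. Let $c_y$ denote the central vertex of $y$, and assume without loss of generality (by flipping) that $x$ is the left child of $y$. The orientation invariant identifies $c_y$ as the rightmost boundary of $x$ and the leftmost boundary of $a$, and a tree-cycle argument (two disjoint connected edge sets in a tree share at most one vertex) yields $x \cap (a \cup b) = \{c_y\}$. Two key observations follow: (i) $c_y$ is a boundary vertex of $a \cup b$, since $c_y$ is incident to an edge in $x$ which lies outside $a \cup b$; and (ii) for any $v \in a \cup b$ with $v \neq c_y$, all edges of $z$ at $v$ lie inside $a \cup b$, so $v$ is a boundary of $a \cup b$ iff $v$ is a boundary of $z$. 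Combining these yields the formula
\[
b(a \cup b) = \{c_y\} \cup \bigl((b(z) \setminus \{c_y\}) \cap (a \cup b)\bigr).
\]

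To bound this for Parts 1 and 3, I would case-split on whether $x$ is a path or a point cluster. If $x$ is a path cluster, let $\beta$ be its boundary other than $c_y$; by the tree-cycle argument $\beta \notin a \cup b$, and since neither $a$ nor $b$ has an edge at $\beta$, being a boundary of $x$ forces $\beta$ to be exposed or incident to an edge outside $z$, so $\beta \in b(z)$. If instead $x$ is a point cluster so that $b(x) = \{c_y\}$, a direct boundary count on $y = x \cup a$ forces $a$ to be a path cluster and $c_y \in b(y)$; moreover $c_y \notin b$ (else $y$ and $b$ would share two distinct vertices $c_y$ and $c_z$, producing a cycle), so $c_y \in b(y)$ upgrades to $c_y \in b(z)$. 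In either case this excludes one vertex from the intersection in the formula, giving $|b(a \cup b)| \leq 1 + (|b(z)| - 1) \leq 2$ and proving Part 1. For Part 3, $|b(z)| \leq 1$ combined with this dichotomy forces $b(z)$ to equal either $\{\beta\}$ (with $\beta \notin a \cup b$) or $\{c_y\}$; in both cases the intersection is empty, so $b(a \cup b) = \{c_y\}$ and $a \cup b$ is a point cluster.

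For Part 2, I would inspect the \texttt{rotate\_up} pseudocode directly. When $z$ is a path cluster, the inner conditional guarded by \texttt{gp\_middle \&\& !gp\_is\_path} is not entered, so \texttt{flip\_grandparent} stays \texttt{false} and $z$'s flip bit is unchanged; the parent $w$ of $z$ is untouched entirely. The child-assignment lines then place the node formerly known as \texttt{parent}, which now represents $a \cup b$, into the slot previously held by \texttt{uncle}, namely $b$. Since neither $z$'s flip nor $w$'s orientation changes, the conceptual side of $a \cup b$ relative to $z$ after the rotation equals the conceptual side of $b$ relative to $z$ before, which is precisely the claimed iff. The main technical step I anticipate is the point-cluster subcase of $x$, where a careful boundary count combined with the tree-cycle argument is needed to force $c_y \in b(z)$.
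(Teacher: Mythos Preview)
Your argument is correct and takes a genuinely different route from the paper. For Parts~1 and~3 you derive an explicit formula $b(a\cup b)=\{c_y\}\cup\bigl((b(z)\setminus\{c_y\})\cap(a\cup b)\bigr)$ and case-split on whether $x$ is a path or a point cluster; the paper instead case-splits on whether $a$ and $b$ are both path clusters and appeals to its earlier helper lemmas (the ``invalid-iff'' characterization and the disconnection lemma showing $x\cup b$ is not connected). Your route is more self-contained, but note one small omission: you never verify that $a\cup b$ is \emph{connected}, which is part of being a valid cluster, and your tree-cycle justification of $x\cap(a\cup b)=\{c_y\}$ applies the ``at most one shared vertex'' fact directly to $x$ and $a\cup b$, which presupposes the latter is connected. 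The paper handles connectivity first, by showing $a$ and $b$ share the central vertex of $z$; in your setup this amounts to observing that $c_z\in a$ (forced by the orientation invariant since $y$ is a path cluster and $x,y$ hang off to the same side). For Part~2 the difference is sharper: the paper deliberately gives an implementation-independent proof---it tracks which child of $z$ contributes the non-middle boundary vertex of $z$, and argues the orientation invariant forces the answer both before and after---so that the lemma holds for \emph{any} correct realization of \texttt{rotate\_up}. You instead inspect the specific pseudocode. Your inspection is correct for that code, though ``$z$'s flip bit is unchanged'' is slightly imprecise: \texttt{push\_flip} may clear it before the final assignment; what is actually invariant is $z$'s conceptual orientation relative to its parent, together with the fact that the reused \texttt{parent} node (now $a\cup b$) is written into the child slot formerly occupied by $b$. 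The paper's approach buys generality across implementations; yours buys directness for the implementation at hand.
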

\begin{proof}
  We need to show that \(a \cup b\) is a valid cluster. To show that it is
  connected, let \(v\) be the central vertex of \(z\). It is a boundary vertex
  of its children \(y\) and \(b\). Since \(y\) is a path cluster, it has another
  boundary vertex \(w\), and due to the orientation invariant, the assumption
  that \(x,y\) hang off to the same side implies that \(w\) is a boundary vertex
  of \(x\) and \(v\) is a boundary vertex of \(a\). Thus, since \(a\) and \(b\)
  share the vertex \(v\), they are connected.

  Let us show that \(a \cup b\) is a valid cluster. If \(a\) or \(b\) are point
  clusters, then we are done by Lemma~\ref{lem:invalid-iff}, so assume both are
  path clusters. Since \(y\) is also a path cluster, it follows from
  Lemma~\ref{lem:invalid-iff} that \(v\) is not a boundary vertex of \(z\).
  Since \(z = x \cup (a \cup b)\), it follows that \(v\) can only be a boundary
  vertex of \(a \cup b\) if \(v\) is the vertex shared with \(x\), but \(x \cup
  b\) is disconnected by Lemma~\ref{lem:path-not-con}, making it impossible for
  \(v\) to be a boundary vertex of \(x\). Thus, \(v\) is not a boundary vertex
  of \(a \cup b\), so it is valid by Lemma~\ref{lem:invalid-iff}.

  Assume that \(z\) is a path cluster and that its sibling \(c\) exists. The
  cluster \(z\) has a boundary vertex that isn't in the middle. Either it comes
  from \(x \cup a\) before and \(x\) after, or it comes from \(b\) before and
  \(a \cup b\) after. If that boundary vertex is shared with \(c\), then the
  child it comes from must hang off to the same side as \(c\) both before and
  after the rotation. If it isn't shared with \(c\), then the child it comes
  from must hang off to the opposite side as \(c\) both before and after the
  rotation. The desired post-condition follows.

  Assume that \(z\) is a point cluster. Since \(x,y\) hang off to the same side,
  the only boundary vertex of \(z\) must be \(w\). However, since \(w\) is not
  in the middle at \(z\), it must come exclusively from one child, and we know
  that it is a boundary vertex of \(x\), so it must come from \(x\). This
  implies that \(a \cup b\) cannot be a path cluster after the rotation.
\end{proof}

\begin{figure}
  \centering
  \begin{tikzpicture}[scale=0.5]
    \begin{scope}
      \node (c0) at (2,0) {\(a\)};
      \node (c1) at (0,0) {\(x\)};
      \node (c2) at (3,{1*sqrt(3)}) {\(b\)};
      \node (b1) at (1,{1*sqrt(3)}) {\(y\)};
      \node (b2) at (2,{2*sqrt(3)}) {\(z\)};
    \end{scope}

    \begin{scope}[every edge/.style={draw=black,very thick},]
      \path (c0) edge (b1);
      \path (c1) edge (b1);
      \path (c2) edge (b2);
      \path (b1) edge (b2);
    \end{scope}
  \end{tikzpicture}
  \caption{An illustration of a tree with the variable names used in
  Section~\ref{sec:rotvalid}.}
  \label{fig:rotvalid}
\end{figure}
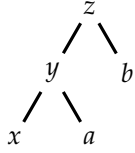

\section{Splaying}
\label{sec:splay}
Splaying is a well-known strategy for balancing binary search trees, and typically
comes in two variants: full splays and semi-splays. Splaying does not translate
verbatim to top trees since not all rotations are allowed in top trees, but it
turns out that rotations are allowed sufficiently often that we can still
implement splay-like operations. This section will describe both a semi-splay
and a full splay, where the semi-splay will reduce the depth of the given node
by a constant factor, and the full splay will reduce the depth to four or less.

If we only consider the asymptotic amortized running times in terms of \(n\),
then there is no reason to ever use a semi-splay, since the full splay provides
strictly stronger guarantees and has the same amortized running time
asymptotically. However, since semi-splays make fewer structural updates to the
tree, they are generally faster by a constant factor, which we conjecture may be significant
in practice. In this paper, we will only use a full splay when we need the
additional guarantees it provides.

Our proofs that the operations run in amortized logarithmic time use the
amortization potential of \(\Phi = \sum_T\sum_{x \in T} r(x)\), with \(r(x) =
\log_2(s(x))\) and \(s(x)\) the number of leaves in the subtree rooted at \(x\).
This is similar to the potential used for splay trees, which uses the number of
nodes rather than leaves in the subtree.

\subsection{Semi-splay step}

\todo[inline]{R1: A figure visualizing what can happen in \texttt{splay\_step()}
would be very useful, maybe in the appendix. It was impossible for me to
understand what's happening without sketching every situation myself.}

To implement this, we will first define a semi-splay step. Implementing a semi-splay
or a full splay involves repeatedly calling the semi-splay step operation.
Conceptually, the semi-splay step serves the purpose that we use rotations for in
binary search trees: decreasing the depth of the given node. We cannot use
\texttt{rotate\_up} for that directly because we aren't always allowed to call that
function. However, \texttt{rotate\_up} is allowed sufficiently often that we can
still implement splay trees, and the semi-splay step is best thought of as a
procedure that searches for such a rotation --- understanding the precise
details of how that happens is not important for understanding the rest of the
paper.

The pseudo-code for a semi-splay step is found below:

\begin{verbatim}
fn semi_splay_step(node) {
    parent = node.parent
    gparent = parent.parent
    if parent or gparent is null {
        return null
    }

    if is_point(node) && is_point(gparent) {
        rotate_up(node)
        return gparent
    }

    ggparent = gparent.parent
    if ggparent is null {
        return null
    }
    if is_path(parent) && (is_path(gparent) || is_point(ggparent)) {
        push_flip(gparent)
        push_flip(parent)

        node_is_left = parent.children[0] == node
        parent_is_left = gparent.children[0] == parent
        gparent_is_left = ggparent.children[0] == gparent
        if node_is_left == parent_is_left {
            rotate_up(node)
            return gparent
        }
        if parent_is_left == gparent_is_left {
            rotate_up(parent)
            return ggparent
        }
        // At this point `node_is_left == gparent_is_left' is true.
        rotate_up(sibling(node)) // swaps sibling(node) and sibling(parent)
        rotate_up(parent) // parent is still node.parent here
        return ggparent
    }
    return semi_splay_step(parent)
}
\end{verbatim}

Since rotations are not always valid, the \texttt{semi\_splay\_step} operation will
follow the root path \(b_0, b_1, \dots, b_k\) until it matches a pattern where
the rotation is allowed. Once the pattern is detected, the algorithm knows that
it has found two nodes hanging off the root path that can be merged, and it
merges them using rotations. If the pattern doesn't match, it tries again by
calling itself with \(b_0\) being the previous \(b_1\). The pattern is such that
it can only fail a few times, and the first node in the match will be one of
\(b_0\), \(b_1\), \(b_2\), or \(b_3\).

The patterns that the algorithm will match are:
\begin{itemize}
  \item \(b_0b_1b_2 = \mathtt{point}, \mathtt{point}, \mathtt{point}\)
  \item \(b_0b_1b_2 = \mathtt{point}, \mathtt{path}, \mathtt{point}\)
  \item \(b_1b_2b_3 = \mathtt{path}, \mathtt{path}, \mathtt{point}\)
  \item \(b_1b_2b_3 = \mathtt{path}, \mathtt{point}, \mathtt{point}\)
  \item \(b_1b_2b_3 = \mathtt{path}, \mathtt{path}, \mathtt{path}\)
\end{itemize}
Here, some of the patterns are matched on \(b_0b_1b_2\), and some are matched on
\(b_1b_2b_3\). The semi-splay step prefers to pick a pattern on \(b_0b_1b_2\) if
there is a match for both types.

\begin{lemma}
  A semi-splay step never makes an invalid rotation in a valid top tree.
\end{lemma}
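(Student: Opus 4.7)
My plan is a case analysis on which branch of \texttt{semi\_splay\_step} fires, appealing to Lemmas~\ref{lem:point-rotate} and~\ref{lem:path-rotate} (together with the orientation post-conditions inside Lemma~\ref{lem:path-rotate}) to justify each \texttt{rotate\_up} call. The two \texttt{push\_flip} calls inside the second outer branch make the booleans \texttt{node\_is\_left}, \texttt{parent\_is\_left}, and \texttt{gparent\_is\_left} faithfully record the ``hangs off to the same side'' relation used in Lemma~\ref{lem:path-rotate}.

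The early \texttt{rotate\_up(node)} inside the \texttt{is\_point(node) \&\& is\_point(gparent)} guard is immediate from Lemma~\ref{lem:point-rotate}. In the second outer branch (so \texttt{parent} is a path cluster), the first nested case (\texttt{node\_is\_left == parent\_is\_left}) applies Lemma~\ref{lem:path-rotate} with $x=$ \texttt{node}, $y=$ \texttt{parent}. The second nested case calls \texttt{rotate\_up(parent)} and is immediate from Lemma~\ref{lem:path-rotate} with $x=$ \texttt{parent}, $y=$ \texttt{gparent} when \texttt{gparent} is a path cluster. In the residual sub-configuration (\texttt{parent} path, \texttt{gparent} point, \texttt{ggparent} point) I would show the triggering condition \texttt{parent\_is\_left == gparent\_is\_left} is vacuous: writing $v_c$ for the central vertex of \texttt{gparent} and $v_b$ for its unique boundary vertex, a short check (using Lemma~\ref{lem:point-boundary} to see that the sibling of \texttt{parent} is a point cluster whose sole boundary is $v_c$) forces $v_b$ to be the central vertex of \texttt{ggparent}, and the orientation invariant applied at \texttt{gparent} and at \texttt{ggparent} then forces the side \texttt{gparent} occupies inside \texttt{ggparent} to be opposite from the side \texttt{parent} occupies inside \texttt{gparent}.

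The third nested case performs two rotations. For \texttt{rotate\_up(sibling(node))}, Lemma~\ref{lem:path-rotate} applies with $x=$ \texttt{sibling(node)} and $y=$ \texttt{parent}: since this case assumes \texttt{node\_is\_left} $\ne$ \texttt{parent\_is\_left}, \texttt{sibling(node)} hangs off the same side as \texttt{parent}. Using also that in this case \texttt{parent\_is\_left} $\ne$ \texttt{gparent\_is\_left}, the post-condition of Lemma~\ref{lem:path-rotate} tells us that after the first rotation, if \texttt{gparent} is a path cluster then the new cluster at the old \texttt{parent}'s storage hangs off the same side as \texttt{gparent}, and if \texttt{gparent} is a point cluster then the new cluster is itself a point cluster. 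The second rotation \texttt{rotate\_up(parent)} is therefore covered by Lemma~\ref{lem:path-rotate} (with $y=$ \texttt{gparent} path) in the first situation and by Lemma~\ref{lem:point-rotate} (with the new \texttt{parent} and \texttt{ggparent} both point, invoking the outer-branch guarantee that \texttt{ggparent} is point whenever \texttt{gparent} is not path) in the second. The recursive branch \texttt{semi\_splay\_step(parent)} performs no rotations at the current call, so by structural induction on the depth of \texttt{node} its rotations are also valid. The main obstacle is the vacuity argument for the residual sub-configuration of the second nested case, where ruling out the triggering condition requires careful tracking of how the orientation invariant at two consecutive levels constrains the stored left/middle/right positions of each boundary vertex.
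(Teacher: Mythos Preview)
Your proposal is correct and follows essentially the same case analysis as the paper's proof, invoking Lemmas~\ref{lem:point-rotate} and~\ref{lem:path-rotate} (including its post-conditions) in the same places. The only cosmetic difference is in the \texttt{parent\_is\_left == gparent\_is\_left} branch: the paper gives a two-line contradiction showing directly that \texttt{gparent} must be a path cluster (if \texttt{parent} is a left child then \texttt{gparent} has a left boundary, hence no rightmost boundary, contradicting that \texttt{gparent} is also a left child), whereas you unfold the same orientation-invariant reasoning more explicitly; your invocation of Lemma~\ref{lem:point-boundary} there is not actually needed, since the placement of $v_b$ already follows once you know \texttt{parent} is a path child of a point \texttt{gparent}.
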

\begin{proof}
  If the semi-splay step hits the \texttt{is\_point(node) \&\& is\_point(gparent)}
  branch, then the rotation is allowed by Lemma~\ref{lem:point-rotate}.
  If we hit the \texttt{node\_is\_left == parent\_is\_left} branch, then since
  \texttt{parent} is a path cluster, the rotation is valid by
  Lemma~\ref{lem:path-rotate}.

  If we hit the \texttt{parent\_is\_left == gparent\_is\_left} branch, then it
  must be the case that \texttt{gparent} is a path cluster. To see this, assume for
  contradiction that it is a point cluster, and assume without loss of
  generality that \texttt{parent} is a left child. Then \texttt{gparent} must
  have a left boundary vertex. This means that \texttt{gparent} has no rightmost
  boundary vertex, but it is a left child so its rightmost boundary vertex must
  exist. Thus, \texttt{gparent} is a path cluster, and the rotation is valid by
  Lemma~\ref{lem:path-rotate}.

  If we hit the \texttt{node\_is\_left == gparent\_is\_left} branch, then since
  \texttt{parent} is a path cluster, the first rotation is valid by
  Lemma~\ref{lem:path-rotate}. If \texttt{gparent} is a path cluster (the
  rotation doesn't change this), then the first post-condition of
  Lemma~\ref{lem:path-rotate} says that \texttt{parent} and \texttt{gparent}
  hang off to the same side after the rotation, so Lemma~\ref{lem:path-rotate}
  says that the second rotation is valid. If \texttt{gparent} is a point
  cluster, then the second post-condition says that \texttt{parent} is a point
  cluster after the rotation, and this case is not reachable unless
  \texttt{ggparent} is also a point cluster, so the second rotation is allowed
  by Lemma~\ref{lem:point-rotate}.
\end{proof}

\begin{lemma}\label{lem:semi-splay-step-success}
  If $x$ has depth $d\geq 5$, then $\mathtt{semi\_splay\_step}(x)$ reduces
  the depth of $x$ by $1$, and returns a node $x'$ with $d-5\leq
  \depth(x')\leq d-2$ such that all modified nodes are in the subtree
  rooted at $x'$.
\end{lemma}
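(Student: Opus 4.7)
The plan is to prove three things: that $\depth(x)$ decreases by exactly one, that the returned $x'$ satisfies $d-5 \leq \depth(x') \leq d-2$, and that every node modified lies in the subtree rooted at $x'$. The key tool is a short case analysis on the sequence of cluster types along the root path. Let $x = a_0, a_1, a_2, \ldots$ be the ancestors of $x$ toward the root, and let $p_i = 1$ if $a_i$ is a path cluster and $p_i = 0$ if it is a point cluster. Reading the pseudocode, a call to \texttt{semi\_splay\_step} with current node $a_i$ either matches the point-point pattern when $\neg p_i \wedge \neg p_{i+2}$, matches the complex path pattern when $p_{i+1} \wedge (p_{i+2} \vee \neg p_{i+3})$, or otherwise recurses on $a_{i+1}$; so the recursion continues precisely when
\begin{equation*}
(p_i \vee p_{i+2}) \wedge \bigl(\neg p_{i+1} \vee (\neg p_{i+2} \wedge p_{i+3})\bigr).
\end{equation*}
Since $d \geq 5$, the nodes $a_0, \ldots, a_5$ all exist, so the early null returns cannot fire during iterations $0$, $1$, or $2$.

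The first step is to enumerate all patterns $(p_0, p_1, p_2, p_3)$ that let iteration $0$ recurse; this gives exactly five subcases. For each subcase I would check whether the analogous formula at iteration $1$ also forces recursion, and find that only two subcases do, both of which pin $p_2 = 0$ and $p_3 = 1$. I would then substitute $p_2 = 0, p_3 = 1$ into the iteration-$2$ formula and verify for each choice of $p_4, p_5 \in \{0,1\}$ that iteration $2$ matches one of the two rotation patterns. This bounds the recursion to at most three levels, so $x'$ is one of $a_2, a_3, a_4, a_5$, giving $d-5 \leq \depth(x') \leq d-2$.

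Next, I would verify that $\depth(x)$ decreases by exactly one in every matching branch. The point-point branch and the first complex subcase call \texttt{rotate\_up} on \texttt{node} itself; the second complex subcase rotates \texttt{parent} (so \texttt{node} is carried up with it); the third subcase first rotates \texttt{sibling(node)} (which leaves \texttt{node}'s depth unchanged but repositions its sibling) and then rotates \texttt{parent} (carrying \texttt{node} up one). At recursion depth $i \geq 1$, the rotated node is an ancestor of $x$, so the effect on $\depth(x)$ is the same. For the subtree claim, each \texttt{rotate\_up}$(y)$ touches only $y$, its former parent and their siblings, and $y$'s grandparent; the returned value is exactly the top of the modified region (\texttt{gparent} for a single rotation, \texttt{ggparent} for a double rotation), so the modifications lie entirely in the subtree rooted at $x'$. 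Validity of the rotations themselves is already handled by the preceding lemma.

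The main obstacle is the pattern enumeration; it is finite but has enough subcases to be tedious, so the key is to organize it around the recursion formula above so that each subcase collapses by two substitutions. A small subtlety in the depth-reduction argument is the double-rotation branch, where the first \texttt{rotate\_up}(\texttt{sibling(node)}) leaves $x$'s depth unchanged --- one must check that \texttt{parent} is still the parent of \texttt{node} after this rotation, which is exactly what the comment in the pseudocode asserts.
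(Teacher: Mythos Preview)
Your approach is correct and is essentially the same as the paper's: both analyze the point/path pattern along the root path to bound the recursion depth, and the paper likewise defers the depth-reduction and subtree-containment claims to a case-by-case check (which you carry out more explicitly than the paper does). One small slip: the number of $(p_0,p_1,p_2,p_3)$ tuples that force recursion at iteration $0$ is seven, not five, but this does not affect the argument, since the crucial observation---that only two of them also recurse at iteration $1$, both forcing $p_2=0$ and $p_3=1$, whence iteration $2$ necessarily matches---is correct.
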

\begin{proof}
  To see that the semi-splay step will match on \(b_3b_4b_5\) or earlier, note that if
  it fails at \(b_1b_2b_3\) and \(b_2b_3b_4\), then \(b_3\) and \(b_4\) are both
  path clusters, but then it must succeed at \(b_3b_4b_5\). The \(b_0b_1b_2\)
  patterns always returns \(b_2\) and the \(b_1b_2b_3\) patterns always returns
  \(b_2\) or \(b_3\). This implies that the returned node is one of
  \(b_2,b_3,b_4,b_5\) which have depth as desired. The remainder of the lemma
  follows immediately by considering each case separately.
\end{proof}

\pagebreak[2]

\begin{lemma}\label{lem:semi-splay-step-fail}
  If $\mathtt{semi\_splay\_step}(x)$ returns \texttt{null}, then no change
  was made to the tree and \(\depth(x) \leq 4\). If \(x\) is a point cluster,
  then \(\depth(x) \leq 3\).  If the root is a point cluster, then \(\depth(x)
  \leq 2\). If both \(x\) and the root are point clusters, then \(\depth(x) \leq
  1\).
\end{lemma}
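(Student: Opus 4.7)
The plan is to handle the four conclusions in order, starting with pseudocode inspection and then a unified case analysis for the depth bounds.

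First, I would observe that the pseudocode contains exactly two \texttt{return null} statements, and both occur inside the early null-pointer guards at the top of the function, before any call to \texttt{rotate\_up} or \texttt{push\_flip}. Hence whenever null is returned, no structural change to the top tree has been performed. For the coarse bound $\depth(x) \le 4$, the cleanest route is contraposition of Lemma~\ref{lem:semi-splay-step-success}: that lemma guarantees a non-null return whenever $\depth(x) \ge 5$, so a null return forces $\depth(x) \le 4$.

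For the three sharper bounds, let $b_0 = x, b_1, \ldots, b_d$ be the root path with $b_d$ the root. Each recursive call shifts the frame up by one, so at the $k$-th recursion level the function treats $b_k$ as ``node'', $b_{k+1}$ as ``parent'', and so on. A null return at level $k$ can arise in only two ways: either $k \ge d-1$ (parent or gparent null), or $k = d-2$ and the point-pattern $(b_k, b_{k+2})$ both being points fails (so ggparent is null and no rotation is done). For the top-level call to reach level $k$, neither pattern may match at any level $j < k$; unfolding the code this gives two failure conditions per earlier level, namely ``$b_j$ or $b_{j+2}$ is a path cluster'' and ``$b_{j+1}$ is a point cluster, or $b_{j+2}$ is a point cluster and $b_{j+3}$ is a path cluster''.

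With these conditions in hand, each sharper bound follows from a short case chase. For $\depth(x) \le 3$ when $x$ is a point cluster, I would consider the only potentially dangerous case $d = 4$ (null returned at level 2) and show that the failure conditions at levels 0 and 1, combined with $b_0$ being a point, force both $b_1$ and $b_3$ to be point, contradicting the level-1 point-pattern failure. For $\depth(x) \le 2$ when the root is a point cluster, a similar argument at $d = 3$ and $d = 4$ shows that the path-pattern failure at the preceding level combined with the root being a point forces an earlier point-pattern to actually match, contradicting the assumption that null was returned. For $\depth(x) \le 1$ when both $x$ and the root are point clusters, the case $d = 2$ is ruled out immediately because then the level-0 point pattern matches. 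The main obstacle is the bookkeeping: several Boolean failure conditions interleave across two or three recursion levels, and one must carefully trace which $b_i$ are forced to be point versus path at each step. However, each derivation is a short chain of implications and only finitely many cases need to be inspected.
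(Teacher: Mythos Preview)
Your framework is sound and matches the paper's approach: the ``no change'' claim follows from inspecting the two \texttt{return null} sites, the bound $\depth(x)\le 4$ is exactly the contrapositive of Lemma~\ref{lem:semi-splay-step-success}, and the sharper bounds come from chasing the failure conditions at each recursion level. The paper argues the same way, just phrasing it at the level of the pattern list rather than as explicit Boolean disjunctions. Your arguments for the ``root is a point cluster'' and ``both are point clusters'' cases are correct as written.

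There is, however, a concrete error in your chain for the case ``$x$ is a point cluster and $d=4$''. From $b_0$ point, the level-$0$ point-pattern failure forces $b_2$ to be a path cluster, and then the level-$0$ path-pattern failure forces $b_1$ to be a point cluster, as you say. But now the level-$1$ point-pattern failure (namely ``$b_1$ or $b_3$ is a path cluster''), together with $b_1$ being a point cluster, forces $b_3$ to be a \emph{path} cluster, not a point cluster as you claim. The contradiction therefore arises at the level-$1$ \emph{path} pattern, not the point pattern: with $b_2$ and $b_3$ both path clusters, the check \texttt{is\_path}$(b_2)$ \texttt{\&\&} (\texttt{is\_path}$(b_3)$ \texttt{||} \texttt{is\_point}$(b_4)$) succeeds, so the function would return non-null at level $1$ and never recurse to level $2$. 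Your stated contradiction, ``$b_1$ and $b_3$ both point, contradicting the level-$1$ point-pattern failure'', is not what the failure conditions actually yield. Only the last two links of your chain need to be redirected; the rest of the plan is fine.
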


\begin{proof}
  We get \(\depth(x) \leq 4\) since the semi-splay step always succeeds for
  \(\depth(x) \geq 5\) by Lemma~\ref{lem:semi-splay-step-success}. If the root
  cluster is a point cluster, then since we match all patterns ending in a point
  cluster, the failure to match must be because \(b_3\) doesn't exist, so
  \(\depth(x) \leq 2\). If \(x\) is also a point cluster, then the failure to
  match must be because \(b_2\) doesn't exist, so \(\depth(x) \leq 1\).
  Finally, if \(x\) is a point cluster and the root is a path cluster, then in
  every possible pattern for \(b_0b_1b_2b_3b_4\) either both of \(b_0,b_2\) are
  point clusters, or \(b_3\) is a point cluster, or both of \(b_2,b_3\)
  are path clusters. In each case \(\mathtt{semi\_splay\_step}(x)\) would return
  non-null, so \(b_4\) can't exist and \(\depth(x) \leq 3\).
\end{proof}

\begin{lemma}\label{lem:semi-splay-step-pot}
  Consider the operation \(\mathtt{semi\_splay\_step}(p_k(x))\) with
  \(p_i(x)\) the \(i\)th ancestor of \(x\) or the root if \(i \geq
  \depth(x)\). Let \(x'\) be the node \(x\) after the semi-splay step. Let
  \(0 \leq k \leq b\) and \(0\leq a \leq k+1\) be such that
  \(\mathtt{semi\_splay\_step}(p_k(x))\) returns \(p_\ell(x')\) for some \(\ell
  \leq b\) (this is the same node as \(p_{\ell+1}(x)\)). Then, the splay
  step changes the amortization potential by \(\Delta\Phi =
  r(\sibling(p_{\ell-1}(x'))) + \sum_{i=a}^{b-1} r(p_i(x')) -
  \sum_{i=a}^b r(p_i(x))\).
\end{lemma}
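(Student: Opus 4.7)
The plan is to exploit that each invocation of \texttt{rotate\_up} preserves the set of leaves in every subtree rooted at or above the rotated node's grandparent, so only a constant number of $r$-values change per \texttt{semi\_splay\_step}; the claimed formula will then fall out by a telescoping identity. First, any recursive tail call inside \texttt{semi\_splay\_step} just shifts the argument one ancestor up without modifying the tree, so it suffices to treat the situation where the call actually rotates at some level $k' \geq k$, and the constraint $a \leq k+1 \leq k'+1$ is inherited. Inspecting the pseudocode leaves three structural subcases: (A) a single \texttt{rotate\_up} on $p_{k'}(x)$, returning the old $p_{k'+2}(x) = p_{k'+1}(x')$, so $\ell = k'+1$; (B) a single \texttt{rotate\_up} on $p_{k'+1}(x)$, returning the old $p_{k'+3}(x) = p_{k'+2}(x')$, so $\ell = k'+2$; and (C) the double call \texttt{rotate\_up} on $\sibling(p_{k'}(x))$ followed by \texttt{rotate\_up} on $p_{k'+1}(x)$, returning the old $p_{k'+3}(x) = p_{k'+2}(x')$, so $\ell = k'+2$.

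Next I would identify which nodes' subtrees (and hence $r$-values) actually change in each subcase. Because a single \texttt{rotate\_up} only alters the children of what was the rotated node's parent, in (A) only the old $p_{k'+1}(x)$ changes, and in the new tree this node has become $\sibling(p_{k'}(x'))$; in (B) only the old $p_{k'+2}(x)$ changes, and it has become $\sibling(p_{k'+1}(x'))$; and in (C) both the old $p_{k'+1}(x)$ (still on the ancestor chain as $p_{k'+1}(x')$, but with new children --- the old $p_{k'}(x)$ and the old uncle) and the old $p_{k'+2}(x)$ (displaced to $\sibling(p_{k'+1}(x'))$, with entirely new children) change. In every subcase, the node that was $p_\ell(x)$ in the old tree is in the new tree exactly $\sibling(p_{\ell-1}(x'))$, so its new $r$-value is precisely the leading term $r(\sibling(p_{\ell-1}(x')))$ in the claimed formula.

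Finally, the formula is assembled by telescoping. The structural observations needed are: (i) for $i \leq k'$, $p_i(x')$ is the same node as $p_i(x)$ with an unchanged subtree, so $r(p_i(x')) = r(p_i(x))$; and (ii) for $i \geq \ell$, $p_i(x')$ is the same node as $p_{i+1}(x)$ with an unchanged subtree, so $r(p_i(x')) = r(p_{i+1}(x))$. Splitting both sums at positions $k'+1$ and $\ell$ and applying (i) and (ii), the difference $\sum_{i=a}^{b-1} r(p_i(x')) - \sum_{i=a}^{b} r(p_i(x))$ collapses, after reindexing the shifted piece, to $\sum_{i=k'+1}^{\ell-1} \bigl[r(p_i(x')) - r(p_i(x))\bigr] - r(p_\ell(x))$; this is exactly the sum of $r$-changes along the new ancestor chain, minus $r(p_\ell(x))$. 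Adding $r(\sibling(p_{\ell-1}(x')))$ then accounts for the $r$-change of the displaced ancestor identified in the second paragraph, matching the total $\Delta\Phi$ over all affected nodes. The main obstacle is the careful bookkeeping in subcase (C), where two successive rotations must be traced to verify that $p_{k'+1}(x')$ is still the same node as the old $p_{k'+1}(x)$ (with new children) and that $\sibling(p_{k'+1}(x'))$ is what used to be $p_{k'+2}(x)$; once this is confirmed, each subcase slots into the telescoping template uniformly.
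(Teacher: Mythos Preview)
Your proposal is correct and follows essentially the same approach as the paper. The paper's own proof is terser: it distinguishes only the single-rotation and double-rotation cases, writes down the direct potential change in each (your cases (A) and (B) collapse into the paper's single-rotation formula $r(\sibling(p_{\ell-1}(x')))-r(p_\ell(x))$), and then waves at the cancellation by saying ``the remaining terms are just the same nodes added and subtracted before and after the operation, but none of those nodes had their number of leaves changed.'' Your argument makes that cancellation explicit via the identities (i) and (ii) and the telescoping split at $k'+1$ and $\ell$, and also spells out the bookkeeping in case~(C) that the paper leaves to the reader; but the underlying structure---identify which nodes' leaf counts change and where they land, then observe that all other summands pair off---is the same.
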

\begin{proof}
  The semi-splay step either rotates once or twice. By drawing the tree before and
  after, we can see that if it rotates once, then the potential changes by
  \(\Delta\Phi = r(\sibling(p_{\ell-1}(x'))) - r(p_\ell(x))\). If it rotates twice,
  then the potential changes by \(\Delta\Phi = r(\sibling(p_{\ell-1}(x'))) +
  r(p_{\ell-1}(x')) - r(p_\ell(x)) - r(p_{\ell-1}(x))\). These expressions are
  equal to the expression in the Lemma because the remaining terms are just the
  same nodes added and subtracted before and after the operation, but none of
  those nodes had their number of leaves changed, so the extra terms cancel out.
\end{proof}

\subsection{Semi-splay}
Using \texttt{semi\_splay\_step} as a subroutine, the pseudocode of
\texttt{semi\_splay} is:
\begin{verbatim}
fn semi_splay(node) {
    top = node
    while top is not null {
        top = semi_splay_step(top)
    }
}
\end{verbatim}
This subroutine will reduce the depth of \texttt{node} by a constant factor. It
does this by repeatedly calling \texttt{semi\_splay\_step} on an ancestor of
\texttt{node}. It is important for the amortized analysis that the semi-splay steps
do not ``overlap'', which the algorithm avoids because all nodes that the
\texttt{semi\_splay\_step} modifies are descendants of the node that it returns.

We will need the following small lemma in our analysis:
\begin{lemma}\label{lem:log-sum}
  For $a,b,c>0$, if $a+b\leq c$ then $\log_2(a)+\log_2(b)\leq 2\log_2(c) - 2$.
\end{lemma}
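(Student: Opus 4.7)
The plan is to reduce this to a one-line application of the AM–GM inequality together with monotonicity of the logarithm. The key observation is that $\log_2(a) + \log_2(b) = \log_2(ab)$, so it suffices to bound the product $ab$ from above in terms of $c$.

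First I would apply AM–GM to $a$ and $b$: since $a,b > 0$, we have $\sqrt{ab} \leq (a+b)/2$, and squaring gives $ab \leq (a+b)^2/4$. Then I would use the hypothesis $a+b \leq c$ together with monotonicity of $x \mapsto x^2$ on the positive reals to obtain $ab \leq c^2/4$.

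Finally I would take $\log_2$ of both sides, using monotonicity of the logarithm, to conclude
\[
\log_2(a)+\log_2(b) = \log_2(ab) \leq \log_2(c^2/4) = 2\log_2(c) - 2,
\]
which is exactly the claimed inequality.

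There is no real obstacle here; the only thing to be slightly careful about is to note that $a,b>0$ is needed so that $\log_2(a), \log_2(b)$ are defined and so that AM–GM applies, and that $c>0$ ensures $\log_2(c)$ is defined. The inequality $a+b \leq c$ together with $a,b>0$ forces $c>0$ anyway, so the hypotheses are consistent.
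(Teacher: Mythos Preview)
Your proof is correct and essentially identical to the paper's own proof: both apply AM--GM to get $\sqrt{ab}\leq (a+b)/2\leq c/2$, square, and take $\log_2$.
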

\begin{proof}
  Let $a,b,c>0$ and $a+b\leq c$, then by the arithmetic-geometric mean
  inequality we have $\sqrt{ab}\leq\frac{a+b}{2}\leq\frac{c}{2}$, thus
  $ab\leq(\frac{c}{2})^2$ and $\log_2(a)+\log_2(b)=\log_2(ab)\leq
  2\log_2(\frac{c}{2})=2(\log_2(c)-1)$.
\end{proof}
\begin{lemma}\label{lem:semi-splay-pot}
  Calling \texttt{semi\_splay} on a node $x$ does
  $O(\depth(x))$ work, changes the potential by
  $\Delta\Phi\leq O\paren[\big]{1+r(\treeroot(x))-r(x)} -
  \Omega(\depth(x))$, and reduces the depth of $x$ to at most
  $\ceil*{\tfrac{4}{5}\depth(x)}$.
\end{lemma}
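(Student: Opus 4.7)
The plan is to track the sequence of values $\mathtt{top}_0 = x, \mathtt{top}_1, \ldots, \mathtt{top}_m = \mathtt{null}$ taken by the loop variable in \texttt{semi\_splay}, and to verify the three assertions (work, depth, potential) in turn. Writing $d = \depth(x)$, Lemma~\ref{lem:semi-splay-step-success} guarantees that every call with $\depth(\mathtt{top}_j) \geq 5$ succeeds, reducing $\mathtt{top}_j$'s depth by one and returning a node at depth between $\depth(\mathtt{top}_j) - 5$ and $\depth(\mathtt{top}_j) - 2$, while Lemma~\ref{lem:semi-splay-step-fail} guarantees that the loop exits once the depth drops to $\leq 4$. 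Since each call does $O(1)$ work and the number of successful calls is $m = \Theta(d)$, the total work is $O(d)$. Each successful step shifts $\mathtt{top}_j$ up by exactly one level without touching its subtree, so every descendant of $\mathtt{top}_j$---in particular $x$---also drops one level per step. Combining $m \geq \lceil (d-4)/5 \rceil$ (top's depth cannot fall by more than five per step) with a short arithmetic check gives $d - m \leq \lceil 4d/5 \rceil$.

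For the potential, I would apply Lemma~\ref{lem:semi-splay-step-pot} to each call and sum the resulting $\Delta\Phi_j$'s across all $m$ calls. Choosing the windows $[a_j,b_j]$ so that the ``new-rank'' sum of call $j$ over ancestors above the returned node matches the ``old-rank'' sum of call $j+1$ over the same range, nearly all terms cancel pairwise, because those ancestors and their subtrees are untouched by call $j$ and so have the same sizes at the start of call $j+1$. What survives is (i) a telescoping of ancestor ranks along the initial and final root paths of $x$, bounded by $O(1 + r(\treeroot(x)) - r(x))$ via the standard splay access-lemma argument, and (ii) one extra $r(\sibling(\cdot))$ term per call, corresponding to a newly formed ``merged'' cluster on the root path. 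For each such merger of two subtrees of sizes $s$ and $u$ into one of size $s+u$, Lemma~\ref{lem:log-sum} gives $\log_2 s + \log_2 u \leq 2\log_2(s+u) - 2$, which once paired with the positive occurrence of $r(s+u)$ elsewhere in the telescoped sum contributes a net $-\Omega(1)$ per call. Summing over $m = \Omega(d)$ successful calls supplies the required $-\Omega(\depth(x))$ summand, so $\Delta\Phi \leq O(1 + r(\treeroot(x)) - r(x)) - \Omega(\depth(x))$.

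The hard part will be the case analysis for the five patterns that \texttt{semi\_splay\_step} may match: each pattern executes a different combination of one or two \texttt{rotate\_up}s, so one must check pattern-by-pattern which two subtrees actually get merged and verify that the resulting cluster's rank appears with the correct sign in the telescoping sum, so that Lemma~\ref{lem:log-sum} can be applied uniformly. Once that bookkeeping is pinned down, the combinatorial count of $m$ together with the log-sum inequality deliver both the $\lceil 4d/5\rceil$ depth bound and the claimed $\Delta\Phi$ estimate in one sweep.
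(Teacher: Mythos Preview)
Your proposal is correct and uses the same ingredients as the paper---Lemma~\ref{lem:semi-splay-step-pot}, Lemma~\ref{lem:log-sum}, and a telescoping of ranks along the sequence of \texttt{top} values---but you make the potential bookkeeping harder than it needs to be. The paper does not telescope \emph{across} iterations with carefully chosen windows $[a_j,b_j]$; instead it bounds each iteration separately: applying Lemma~\ref{lem:semi-splay-step-pot} with $a=k=0$, $b=\ell$, pairing $r(\sibling(p_{\ell-1}(x')))$ with $r(p_{\ell-1}(x'))$ via Lemma~\ref{lem:log-sum}, and then crudely bounding the at most five positive and five negative remaining rank terms to obtain $\Delta\Phi \le 5\,r(\text{new }\mathtt{top}) - 5\,r(\text{old }\mathtt{top}) - 2$. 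The sum over iterations then telescopes on the single quantity $r(\mathtt{top})$. Your cross-iteration window scheme is the device the paper needs for \texttt{full\_splay} (where calls alternate between \texttt{node} and an ancestor, forcing $k>0$), but it is unnecessary here.

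Your last paragraph contains a misconception: no pattern-by-pattern case analysis is required. Lemma~\ref{lem:semi-splay-step-pot} already packages the one-rotation and two-rotation cases into a single formula valid for every pattern, so once you invoke it the five patterns never need to be distinguished again. The ``hard part'' you anticipate has already been done inside that lemma.
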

In other words, the amortized cost of \texttt{semi\_splay} is
$O\paren[\big]{1+r(\treeroot(x))-r(x)}=O(\log n)$.

\begin{proof}
  We consider an iteration of the \texttt{semi\_splay}. Let \(x\) be
  \texttt{top} before the semi-splay step, and let \(x'\) be the same node in the
  tree after the semi-splay step. Then, the next value of \texttt{top} is
  \(p_\ell(x')\) for some integer \(1 \leq \ell \leq 4\). By
  Lemma~\ref{lem:semi-splay-step-pot} (with \(a=k=0\) and \(b=\ell\)), the potential
  changes by \(\Delta\Phi = r(\sibling(p_{\ell-1}(x'))) + \sum_{i=0}^{\ell-1}
  r(p_i(x')) - \sum_{i=0}^{\ell} r(p_i(x))\). By Lemma~\ref{lem:log-sum}, we get
  \(r(p_{\ell-1}(x')) + r(\sibling(p_{\ell-1}(x'))) \leq 2r(p_\ell(x')) - 2\),
  from which it follows that \(\Delta\Phi \leq 2r(p_\ell(x')) +
  \sum_{i=0}^{\ell-2} r(p_i(x')) - \sum_{i=0}^{\ell} r(p_i(x)) - 2\), and since
  \(\ell-2 \leq 2\), we get \(\Delta\Phi \leq 5r(p_\ell(x')) - 5r(x) - 2 =
  5r(\text{new \texttt{top}}) - 5r(\text{old \texttt{top}}) - 2\).

  Since the semi-splay step is always successful when \(\depth(\mathtt{top}) \geq
  5\), we must have at least \(\floor*{\tfrac{1}{5}\depth(x)}\geq
  \tfrac{1}{5}(\depth(x)-4)\) successful iterations. Summing up over all iterations,
  the total change in potential is therefore at most \(\Delta\Phi\leq
  5\paren[\big]{r(\treeroot(x))-r(x)} - \tfrac{2}{5}(\depth(x)-4) =
  5\paren[\big]{\tfrac{8}{25}+r(\treeroot(x))-r(x)} - \tfrac{2}{5}\depth(x)\). This is as
  desired.

  Each successful iteration reduces the depth of \(x\) by \(1\), so the
  resulting depth of \(x\) is at most \(\depth(x) -
  \floor*{\tfrac{1}{5}\depth(x)} = \ceil*{\tfrac{4}{5}\depth(x)}\) as claimed.
\end{proof}

\subsection{Full splay}
Using \texttt{semi\_splay\_step} as a subroutine, the pseudocode for
\texttt{full\_splay} is:
\begin{verbatim}
fn full_splay(node) {
    while true {
        top = semi_splay_step(node)
        if top is null {
            return
        }
        semi_splay_step(top)
    }
}
\end{verbatim}
This subroutine will move \texttt{node} close to the root so that its depth is
bounded by a constant. It does this by alternating between calling
\texttt{semi\_splay\_step} on \texttt{node} and on one of its ancestors \texttt{top}.
If we just wanted to reduce the depth to a constant, then calling
\texttt{semi\_splay\_step(node)} in a loop would suffice, but the second
\texttt{semi\_splay\_step} is necessary to make the amortized running time work. This
is similar to ordinary splay trees, which require Zig-Zig or Zig-Zag steps
rather than simply repeating Zig steps.

\begin{lemma}\label{lem:full-splay}
  Calling \texttt{full\_splay} on a node $x$ does $O(\depth(x))$ work,
  changes the potential by $\Delta\Phi\leq O\paren[\big]{1+r(\treeroot(x))-r(x)} -
  \Omega(\depth(x))$, and reduces the depth of $x$ so it satisfies the same
  bounds as in Lemma~\ref{lem:semi-splay-step-fail}.
\end{lemma}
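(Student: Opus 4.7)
The plan is to essentially repeat the proof of Lemma~\ref{lem:semi-splay-pot}, augmenting it with the observation that in \texttt{full\_splay} the first call of every iteration is on \texttt{node} itself (rather than on an ever-moving ancestor), which guarantees $\depth(\mathtt{node})$ decreases by at least $1$ per successful iteration.

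First I would bound work and final depth. By Lemma~\ref{lem:semi-splay-step-success}, each iteration whose call \texttt{semi\_splay\_step(node)} succeeds decreases $\depth(\mathtt{node})$ by at least $1$; hence the loop executes at most $\depth(x)+1$ iterations, each doing $O(1)$ work for a total of $O(\depth(x))$. The loop exits precisely when \texttt{semi\_splay\_step(node)} returns null, at which instant Lemma~\ref{lem:semi-splay-step-fail} applied to \texttt{node} delivers the final-depth bounds verbatim, matching the statement.

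For the amortized potential, I would invoke Lemma~\ref{lem:semi-splay-step-pot} on each of the $O(\depth(x))$ individual \texttt{semi\_splay\_step} calls, choosing the $(a,k,b)$ parameters so that the sums across successive applications telescope. Concretely, for the first call of an iteration take $k=0$, $a=0$, $b=\ell_1$; for the second call (on the returned \texttt{top}, at depth $\ell_1$ after the first call) take the analogous parameters relative to \texttt{top}, with $\ell_1,\ell_2\in\{1,2,3,4\}$ by Lemma~\ref{lem:semi-splay-step-success}. Using Lemma~\ref{lem:log-sum} to collapse each sibling pair $r(p_{\ell-1}(x')) + r(\sibling(p_{\ell-1}(x')))$ into $2r(p_\ell(x')) - 2$ yields a per-iteration bound of the form $\Delta\Phi \le c\paren[\big]{r(\mathtt{top}_{\text{new}}) - r(\mathtt{node}_{\text{old}})} - \Omega(1)$ for a small constant $c$, directly analogous to the per-iteration bound derived in Lemma~\ref{lem:semi-splay-pot}. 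Summing over the $\Omega(\depth(x))$ iterations telescopes the $r$-terms down the spine and produces $\Delta\Phi \le O\paren[\big]{1 + r(\treeroot(x)) - r(x)} - \Omega(\depth(x))$, as desired.

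The principal obstacle is keeping the telescoping honest across the two distinct \texttt{semi\_splay\_step} calls within a single iteration while still extracting the constant loss required for the $-\Omega(\depth(x))$ term, and handling the final iteration where the first call returns null so only one call (and no second one) contributes. Because both returned offsets $\ell_1,\ell_2$ lie in $\{1,2,3,4\}$, only finitely many configurations arise, each reducing to a bounded algebraic verification of the same qualitative inequality, paralleling the structure of Lemma~\ref{lem:semi-splay-pot}'s proof.
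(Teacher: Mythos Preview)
Your work and final-depth arguments are fine, but the potential bound has a genuine gap: the per-iteration inequality you derive does not telescope across iterations of \texttt{full\_splay}. You propose to mimic the proof of Lemma~\ref{lem:semi-splay-pot}, obtaining a bound of the shape $\Delta\Phi_{\text{iter}} \le c\bigl(r(\mathtt{top}_{\text{new}}) - r(\mathtt{node})\bigr) - \Omega(1)$ by applying Lemma~\ref{lem:log-sum} to the pair $r(p_{\ell-1}(x'))+r(\sibling(p_{\ell-1}(x')))$ within each call. In \texttt{semi\_splay} this telescopes because the next iteration starts from $\mathtt{top}_{\text{new}}$. But in \texttt{full\_splay} the next iteration's first call is again on \texttt{node}, whose rank $r(\mathtt{node})=r(x)$ is unchanged (its subtree is untouched), while each $r(\mathtt{top}_{\text{new}})$ can be as large as $r(\treeroot(x))$. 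Summing over $\Theta(\depth(x))$ iterations then yields a bound of order $\depth(x)\cdot\bigl(r(\treeroot(x))-r(x)\bigr)$, not $O\bigl(r(\treeroot(x))-r(x)\bigr)$.

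The paper avoids this by a different placement of Lemma~\ref{lem:log-sum}: rather than pairing $p_{\ell-1}$ with its own sibling inside each call, it pairs the \emph{two} sibling terms $r(\sibling(p_{\ell_1-1}(x')))$ and $r(\sibling(p_{\ell_2-1}(x'')))$ coming from the two calls of one iteration, which lie in disjoint subtrees of a common ancestor $p_8(x'')$. Combined with choosing $(a,b)$ to be a fixed window (namely $a=1$, $b=9$ and $b=8$) anchored at \texttt{node}, the intermediate $\sum r(p_i(x'))$ cancels and one obtains $\Delta\Phi_{\text{iter}} \le \sum_{i=1}^{9} r(p_i(x'')) - \sum_{i=1}^{9} r(p_i(x)) - 2$. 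Since $x''$ of one iteration is $x$ of the next, \emph{these} sums telescope. The missing idea in your proposal is precisely this cross-call application of Lemma~\ref{lem:log-sum} together with the fixed-window bookkeeping; without it, the telescoping you invoke does not hold.
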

In other words, the amortized cost of \texttt{full\_splay} is
$O\paren[\big]{1+r(\treeroot(x))-r(x)}=O(\log n)$.

\begin{proof}
  We consider the change of potential in an iteration of \texttt{full\_splay}
  where both semi-splay steps succeed. We let \(x\) be the node before the iteration,
  \(x'\) the node after one semi-splay step, and \(x''\) the node after two splay
  steps. Let \(p_{\ell_1}(x')\) and \(p_{\ell_2}(x'')\) be the return values of each
  semi-splay step. By Lemma~\ref{lem:semi-splay-step-success}, we have \(1 \leq \ell_1 \leq
  4\) and \(\ell_1 < \ell_2 \leq 8\). By using Lemma~\ref{lem:semi-splay-step-pot} twice
  (first with \((a,k,b)=(1,0,9)\) then with \((a,k,b)=(1,\ell_1,8)\)), the two
  semi-splay steps change the potential by
  \begin{equation}
    \Delta\Phi = r(\sibling(p_{\ell_2-1}(x''))) + r(\sibling(p_{\ell_1-1}(x')))
    + \sum_{i=1}^{7} r(p_i(x'')) - \sum_{i=1}^{9} r(p_i(x)).
  \end{equation}
  Note that the term \(\sum_{i=1}^{8} r(p_i(x'))\) has canceled in the above.
  Since \(\sibling(p_{\ell_2-1}(x''))\) and \(\sibling(p_{\ell_1-1}(x'))\) have
  disjoint subtrees and are both descendants of \(p_8(x'')\), it follows by
  Lemma~\ref{lem:log-sum} that \(r(\sibling(p_{\ell_2-1}(x''))) +
  r(\sibling(p_{\ell_1-1}(x'))) \leq 2r(p_8(x''))-2\). We note further that
  \(2r(p_8(x''))-2 \leq  r(p_8(x''))+r(p_{9}(x''))-2\), thus \(\Delta\Phi \leq
  \sum_{i=1}^{9} r(p_i(x'')) - \sum_{i=1}^{9} r(p_i(x)) - 2\). Analogously, the
  iterations where only one of the semi-splay steps succeed increase the potential by
  at most \(\Delta\Phi \leq \sum_{i=1}^{9} r(p_i(x'')) - \sum_{i=1}^{9}
  r(p_i(x))\), without the minus two term.

  Since both semi-splay steps always succeed when \(\depth(x) \geq 9\) and the depth
  decreases by two each time, both semi-splay steps succeed at least
  \(\tfrac{1}{2}(\depth(x)-9)\) times. Summing the changes of potentials over
  all iterations telescopes to \(\Delta\Phi \leq \sum_{i=1}^{9} r(p_i(x')) -
  \sum_{i=1}^{9} r(p_i(x)) - (\depth(x)-9) \leq 9\paren[\big]{1 +
  r(\treeroot(x))-r(x)} - \depth(x)\) with \(x\) the node before and \(x'\) the
  node after the entire full splay. This is as desired. That the depth bounds
  are as desired follows by Lemma~\ref{lem:semi-splay-step-fail} since the semi-splay step
  has just failed on \(x'\) when the algorithm returns.
\end{proof}

\section{Finding the consuming node of a vertex}\label{sec:consuming}

The consuming node of a vertex is defined as the lowest common ancestor in the
top tree of all edges incident to the vertex. When the vertex is not exposed,
this is the same as the smallest cluster containing the vertex, without having
the vertex as a boundary vertex. Since boundary vertices can only ``disappear''
in nodes where they are the central vertex, the consuming node for a non-leaf vertex is also the
largest cluster with the vertex as the central vertex. The latter condition also
works when the vertex is exposed.

Pseudocode for finding the consuming node is as follows:
\begin{verbatim}
fn find_consuming_node(v) {
    if v has no neighbors {
        return null
    }
    node = any edge incident to v
    semi_splay(node)
    if v has at most one incident edge {
        return node
    }

    // Where is v a boundary vertex in node? (taking flip into account)
    is_left = (is v left endpoint of node) != node.flip
    is_middle = false
    is_right = (is v right endpoint of node) != node.flip
    last_middle_node = null

    while node is not the root {
        parent = node.parent
        is_left_child = parent.children[0] == node

        // Compute where v is in the parent, taking the parent's
        // flip into account.
        is_middle = if is_left_child {
            is_right || (is_middle && !has_right_boundary(node))
        } else {
            is_left || (is_middle && !has_left_boundary(node))
        }
        is_left = (is_left_child != parent.flip) && !is_middle
        is_right = (is_left_child == parent.flip) && !is_middle

        // Go up to the parent
        node = parent

        // If v is in the middle, then it could be the consuming node.
        if is_middle {
            if !has_middle_boundary(node) {
                // This only happens if the vertex is not exposed.
                return node
            }
            last_middle_node = node
        }
    }
    // This only happens when the vertex is exposed.
    return last_middle_node
}
\end{verbatim}
The idea behind \texttt{find\_consuming\_node} is to follow the root path,
keeping track whether the vertex is the left, middle or right boundary vertex of
the current node. If it encounters a node where it should be the middle boundary
vertex, but the node has no middle boundary vertex, then it has found the
smallest cluster containing the vertex without the vertex being a boundary
vertex. If it reaches the root, then the function returns the largest cluster
with the vertex as the central vertex.

The \texttt{is\_left}, \texttt{is\_middle}, and \texttt{is\_right} booleans
store whether the vertex is the left, middle, or right boundary vertex of the
current node respectively. These booleans are ``from the perspective of the
parent'', so if the current node has its flip bit set, then \texttt{is\_left}
and \texttt{is\_right} are swapped.

To see that \texttt{is\_middle} is computed correctly, we note that this holds if
the vertex is the central vertex of the parent. If \texttt{node} is a left child
of the parent (not taking the parent's flip into account), then this happens
when \texttt{v} is the rightmost boundary vertex of \texttt{node}, which happens
if it is to the right, or in the middle when the node has no boundary vertex to
the right. It works analogously when \texttt{node} is a right child. We note
here that \texttt{has\_left\_boundary} must take the flip bit of the given node
into account too.

To see that \texttt{is\_left} is computed correctly, note that a boundary vertex
is to the left when it comes from the left child and isn't the central vertex.
Analogously for \texttt{is\_right}. The expression in the code uses
\texttt{parent.flip} because we are computing the value taking the parent's flip
bit into account.

If we let \(e\) be the edge incident to \(v\) chosen at the start and let \(c\)
be the returned consuming node, then the semi-splay changes the potential by
\(O(1+r(\treeroot(e))-r(e)) - \Omega(\depth(e)) \leq O(\log
n)-\Omega(\depth(c))\), so the amortized cost of
\(\mathtt{find\_consuming\_node}(v)\) is \(O(\log n)-\Omega(\depth(c))\).  Thus,
after a call to \(c=\mathtt{find\_consuming\_node}(v)\) we can do
\(O(\depth(c))\) work ``for free'', which we use e.g.\ in our implementation of
\texttt{expose} and \texttt{deexpose}.

\section{External operations on top trees}\label{sec:dynamic-operations}

This section contains the pseudocode and proofs for the dynamic operations
\texttt{expose}, \texttt{deexpose}, \texttt{link} and \texttt{cut}. 

\subsection{Expose}
\label{sec:expose}

The implementation of \texttt{expose} in this section uses a semi splay followed
by a full splay because that version allows for the simplest implementation, but it is possible to avoid the full splay. How this is done is explained in
Appendix~\ref{sec:expose-semi}.

Pseudocode for an implementation of expose based on \texttt{full\_splay}:

\begin{verbatim}
fn expose(vertex) {
    node = find_consuming_node(vertex) // this contains a semi_splay
    if node is null {
        // The vertex has degree zero.
        vertex.exposed = true
        return null
    }

    while is_path(node) { // rotate_up until consuming node is a point cluster
        parent = node.parent
        push_flip(node)
        node_idx = index of node in parent.children
        rotate_up(node.children[node_idx])
        node = parent
    }

    full_splay(node)

    // Now depth(node)<=1, and node is the consuming point cluster.
    root = null
    while node is not null {
        root = node
        root.num_boundary_vertices += 1
        node = root.parent
    }
    vertex.exposed = true
    return root
}
\end{verbatim}

When a vertex is exposed, it is considered a boundary vertex of all clusters
containing the vertex, so we need to increment the number of boundary vertices
for all clusters it is not already a boundary vertex in. Those clusters are
exactly the consuming node of the vertex and all of its ancestors, but we cannot
just increment the number of boundary vertices for those ancestors, because they
might already have two boundary vertices. Because of this, the operation will
first ``prepare'' the consuming node by ensuring that it has no ancestors with
two boundary vertices, and then actually expose it.

The purpose of the first loop is to ensure that \texttt{node} is a point cluster,
which makes the \texttt{full\_splay} give stronger guarantees about the new
depth. The rotation is allowed by Lemma~\ref{lem:path-rotate}. Additionally, it
can be seen that \texttt{node} is still the consuming node after the loop,
because the rotation makes \texttt{parent} the lowest common ancestor of what
were the two children of \texttt{node} before the rotation.

After the full splay, the actual expose happens. Here, \texttt{node} has
depth at most one by Lemma~\ref{lem:full-splay} since \texttt{node} is still a
point cluster. Additionally, its only boundary vertex cannot be in the middle
because that's where the vertex being consumed is. Therefore, exposing the
vertex does not break the orientation invariant for the root. Finally, since
both \texttt{node} and the root are point clusters, increasing their number of
boundary vertices does not make them invalid.

Each rotation in the \texttt{is\_path(node)} loop changes the potential by
\begin{equation}
  r(\mathtt{node.children}[\mathtt{!node\_idx}] \cup
  \mathtt{sibling}(\mathtt{node}))-r(\mathtt{node})
  \leq r(\mathtt{node.parent}) - r(\mathtt{node}).
\end{equation}
This sum telescopes, so the total change to the
potential by the loop is at most \(\log_2 n\). Additionally, the loop decreases
the depth of \texttt{node} by one each time, so the number of iterations is at
most the depth of the node that was semi-splayed inside
\texttt{find\_consuming\_node}, so by Lemma~\ref{lem:semi-splay-pot}, the code
until just before the full splay has an amortized running time of \(O(\log n)\).
Since the rest of the function also runs in amortized \(O(\log n)\), the total
running time is amortized \(O(\log n)\).


\subsection{Deexpose}

Pseudocode for \texttt{deexpose} can be found below:
\begin{verbatim}
fn deexpose(vertex) {
    root = null
    node = find_consuming_node(vertex)
    while node is not null {
        root = node
        root.num_boundary_vertices -= 1
        node = root.parent
    }
    vertex.exposed = false
    return root
}
\end{verbatim}

When a vertex is exposed, that means that it is a boundary vertex of all
clusters containing it. To deexpose it, we need to remove it as boundary vertex
of the clusters it isn't a true boundary vertex of, which is exactly the
consuming node of the vertex and all of its ancestors. The implementation simply
finds the consuming node, then subtracts one from the number of boundary
vertices from those ancestors.

The algorithm does not break the orientation invariant anywhere because, in each
node we remove the boundary vertex from, it is the other boundary vertex whose
orientation is important, and the other boundary vertex does not move around in
the tree.

The operation does \(O(\depth(c))\) work with \(c\) being the consuming node of
the vertex, but the \texttt{find\_consuming\_node} call changes the potential by
at most \(O(\log n)-\Omega(\depth(c))\), so the amortized running time is
\(O(\log n)\).

\subsection{Link}
Given an implementation of \texttt{expose}, we can implement \texttt{link}:
\begin{verbatim}
fn link(u,v) { // Assumes u and v in trees with no exposed vertices
    Tu = expose(u)
    if Tu is not null and has_left_boundary(Tu) {
        Tu.flip = !Tu.flip
    }
    u.exposed = false
    Tv = expose(v)
    if Tv is not null and has_right_boundary(Tv) {
        Tv.flip = !Tv.flip
    }
    v.exposed = false

    T = new node corresponding to the edge (u,v)
    T.num_boundary_vertices = (Tu is not null)+(Tv is not null)

    if Tu is not null {
        T = new node with children Tu and T
        T.num_boundary_vertices = (Tv is not null)
    }
    if Tv is not null {
        T = new node with children T and Tv
        T.num_boundary_vertices = 0
    }

    return T
}
\end{verbatim}
We can implement \texttt{link} using \texttt{expose} in this manner because an exposed
vertex is always a boundary vertex of all clusters it appears in, which is
exactly what we need to add a new edge to the vertex. The pseudocode puts the
two trees together by first merging the tree containing \(u\) with the new edge,
then merging the result with the tree containing \(v\).

The trees are flipped if necessary to satisfy the orientation invariant in the
new nodes in the tree. Note that the checks for whether to flip require that the
trees do not already have any exposed vertices, as the check would not be able
to tell them apart from the one exposed by \texttt{link}.

The amortized running time of the \texttt{link} operation is \(O(\log n)\)
because that's the amortized cost of the two \texttt{expose} operations, and the
new nodes that it adds to the tree increase the amortization potential by at
most \(2\log_2(n)\).

\subsection{Cut}
Using a full splay and \texttt{deexpose} we can implement \texttt{cut} as follows:
\begin{verbatim}
  fn delete_all_ancestors(node) {
      p = node.parent
      if p is not null {
          s = sibling(node)
          delete_all_ancestors(p)
          s.parent = null
      }
      delete node
  }

  fn cut(e) { // Assumes there are no exposed vertices
      u = e.vertices[0]
      v = e.vertices[1]
      full_splay(e)
      // now depth(e)<=2, and if e is a leaf edge, depth(e)<=1
      delete_all_ancestors(e)
      u.exposed = true
      v.exposed = true
      Tu = deexpose(u)
      Tv = deexpose(v)
      return (Tu,Tv)
  }
\end{verbatim}
Since we assume that there are no exposed vertices, the full splay reduces
the depth to \(2\) when \(e\) is a path cluster, or \(1\) when \(e\) is a point
cluster. For a path cluster, the two top trees that remain
when removing the ancestors of the edge must correspond to the two
trees you get by removing the edge, since otherwise the original top tree would
contain a cluster that isn't connected. For point clusters, removing
the ancestors of the edge does not cut the top tree into several
pieces, so here the remaining top tree also corresponds to the
remaining piece of the underlying tree.

Note that after removing the ancestors, the vertices are marked as boundary
vertices of all clusters they appear in, which is incorrect. Marking them as
exposed restores the top tree invariants, and they can then be deexposed
afterwards.

The amortized running time of the cut operation is \(O(\log n)\) because that's
the amortized cost of a full splay and the \texttt{deexpose} operations.
Removing the edge and its at most \(2\) ancestors only decreases the
amortization potential, so that is also okay.


\section{Implementation and testing}\label{sec:implement}

To help verify the correctness of the pseudocode in this paper, we provide an
implementation of the data structure in the C programming language. The
implementation can be found at: \\
\url{https://gitlab.com/aliceryhl/toptree-c-example}

Whenever a function in the C code also exists as pseudocode in the paper, we
have ensured that they match line-for-line to make it easy to verify that they
implement the algorithm in the same way. There are a few exceptions since the C
code must handle e.g.\ allocation failures, but the equivalence should be clear.

The C code maintains the following user data: For each edge, a weight is stored.
Additionally, for each path cluster, the maximum weight of any edge on the path
between the two boundary vertices is stored. Using this information, we can use
the top tree to dynamically maintain a minimum spanning tree of a weighted graph
as new edges are added to the graph (see e.g.~\cite{HolmLT01}).

To verify the correctness of the C code, we have implemented a testing utility
that generates a random graph with a given number of vertices and edges, then
verifies that the minimum spanning tree generated by our top tree has the same
total weight as the minimum spanning tree generated by running Kruskal's
algorithm~\cite{kruskal1956shortest} on the same graph. Furthermore, our
implementation provides a way to check whether all invariants are satisfied,
which is called periodically during the test. We have run this testing utility
on a larger number of random graphs. The two algorithms agreed on the total
weight every time, and the invariants were satisfied every time we checked them.
This makes it very likely that the implementation is correct.

\pagebreak[4]
\phantomsection\addcontentsline{toc}{section}{References}
\bibliography{toptree}

\newpage
\appendix

\section{Expose that avoids full splay}
\label{sec:expose-semi}

The \texttt{expose} operation in the main text of the paper uses a full splay to move the
consuming node of the vertex very close to the root, which makes the actual
expose rather simple. However, it is also possible to implement expose using
only a semi-splay, reducing the amortized cost of \texttt{expose} (and
\texttt{link}) by a constant factor that may be significant in practice.

This implementation of \texttt{expose} consists of three steps: first semi-splay the
vertex, then ``prepare'' for the expose operation by ensuring that no ancestors
of the consuming node are path clusters, then we actually expose it.

\subsection{Preparing for the expose operation}
\label{sec:prepare-expose}

We define a subroutine that performs some rotations to prepare for the expose
operation. This subroutine ensures that no ancestors of the consuming node are
path clusters. Note that it takes as argument the consuming node of the vertex
rather than the vertex itself. The function returns the new consuming node.

\begin{verbatim}
fn prepare_expose(consuming_node) {
    node = consuming_node
    while node is not the root {
        parent = node.parent
        if is_point(node) {
            node = parent
        } else {
            push_flip(parent)
            push_flip(node)

            sibling = sibling(node)
            sibling_idx = index of sibling in parent.children
            same_side_child = node.children[sibling_idx]
            if is_path(same_side_child) or is_point(sibling) {
                // Case (a),(b),(c),(d)
                other_side_child = node.children[1-sibling_idx]
                rotate_up(other_side_child)
                if node == consuming_node {
                    // Case (a),(b)
                    consuming_node = parent
                }
                node = parent
            } else {
                uncle = sibling(parent)
                gparent = parent.parent
                uncle_idx = index of uncle in gparent.children

                if sibling_idx == uncle_idx {
                    // Case (e)
                    rotate_up(node)
                } else {
                    // Case (f)
                    rotate_up(sibling)
                }
            }
        }
    }
    return consuming_node
}
\end{verbatim}

The correctness of each rotation is seen by considering the figure and noticing
that the new cluster must necessarily be valid in each case.

\begin{figure}[h]
  \begin{subfigure}{0.45\textwidth}
    \begin{tikzpicture}[scale=0.8]
      \node[draw=black,circle] (A) at (0,0) {A};
      \node[draw=black,fill=gray,circle] (B) at (2,1) {B};
      \node[draw=black,circle] (C) at (4,0) {C};
      \node[draw=black,circle] (D) at (-2,1) {D};
      \draw (A) -- (B);
      \draw (B) -- (C);
      \draw (A) -- (D);
      \draw[dashed] (2,0.1) ellipse (2.5cm and 1.4cm);
    \end{tikzpicture}
    \caption{B is exposed, next edge on A. We merge AB and DA.}
  \end{subfigure}
  \hfill
  \begin{subfigure}{0.45\textwidth}
    \begin{tikzpicture}[scale=0.8]
      \node[draw=black,circle] (A) at (0,0) {A};
      \node[draw=black,fill=gray,circle] (B) at (2,1) {B};
      \node[draw=black,circle] (C) at (4,0) {C};
      \node[draw=black,circle] (D) at (6,1) {D};
      \draw (A) -- (B);
      \draw (B) -- (C);
      \draw (C) -- (D);
      \draw[dashed] (2,0.1) ellipse (2.5cm and 1.4cm);
    \end{tikzpicture}
    \caption{B is exposed, next edge on C. We merge BC and CD. Symmetric  to Case (a).}
  \end{subfigure}

  \begin{subfigure}{0.45\textwidth}
    \begin{tikzpicture}[scale=0.8]
      \node[draw=black,fill=gray,circle] (A) at (0,0) {A};
      \node[draw=black,circle] (B) at (2,1) {B};
      \node[draw=black,circle] (C) at (4,0) {C};
      \node[draw=black,circle,dashed] (D) at (4,2) {D};
      \draw (A) -- (B);
      \draw (B) -- (C);
      \draw (B) -- (D);
      \draw[dashed] (2,0.1) ellipse (2.5cm and 1.4cm);
    \end{tikzpicture}
    \caption{A is exposed, edge to D is point cluster and connects on B. We
    merge AB and BD.}
  \end{subfigure}
  \hfill
  \begin{subfigure}{0.45\textwidth}
    \begin{tikzpicture}[scale=0.8]
      \node[draw=black,fill=gray,circle] (A) at (0,0) {A};
      \node[draw=black,circle] (B) at (2,1) {B};
      \node[draw=black,circle] (C) at (4,0) {C};
      \node[draw=black,circle] (D) at (6,1) {D};
      \draw (A) -- (B);
      \draw (B) -- (C);
      \draw (C) -- (D);
      \draw[dashed] (2,0.1) ellipse (2.5cm and 1.4cm);
    \end{tikzpicture}
    \caption{A is exposed, next edge connects on C. We merge BC and CD.}
  \end{subfigure}

  \begin{subfigure}{0.45\textwidth}
    \begin{tikzpicture}[scale=0.8]
      \node[draw=black,fill=gray,circle] (A) at (0,0) {A};
      \node[draw=black,circle] (B) at (2,1) {B};
      \node[draw=black,circle] (C) at (4,0) {C};
      \node[draw=black,circle] (D) at (4,2) {D};
      \node[draw=black,circle] (E) at (6,1) {E};
      \draw (A) -- (B);
      \draw (B) -- (C);
      \draw (B) -- (D);
      \draw (D) -- (E);
      \draw[dashed] (2,0.1) ellipse (2.5cm and 1.4cm);
    \end{tikzpicture}
    \caption{A is exposed, edge to D is path cluster on B, and edge to E
    connects to B. 
    We merge BD and DE.}
  \end{subfigure}
  \hfill
  \begin{subfigure}{0.45\textwidth}
    \begin{tikzpicture}[scale=0.8]
      \node[draw=black,fill=gray,circle] (A) at (0,0) {A};
      \node[draw=black,circle] (B) at (2,1) {B};
      \node[draw=black,circle] (C) at (4,0) {C};
      \node[draw=black,circle] (D) at (4,2) {D};
      \node[draw=black,circle] (E) at (6,1) {E};
      \draw (A) -- (B);
      \draw (B) -- (C);
      \draw (B) -- (D);
      \draw (C) -- (E);
      \draw[dashed] (2,0.1) ellipse (2.5cm and 1.4cm);
    \end{tikzpicture}
    \caption{A is exposed, edge to D is path cluster on B, and edge to E
    connects to C. We swap BD and CE, leaving situation (d) for next iteration.}
    \label{fig:prepare-expose-cases-swap}
  \end{subfigure}
  \caption{The different situations one can encounter during the subroutine from
  section~\ref{sec:prepare-expose} when the current node has two boundary
  vertices.  The two edges AB and BC inside the dashed circles are the two
  children of the current node. The gray vertex corresponds to the vertex we
  want to expose. The two other vertices inside the dashed circle are boundary
  vertices of the current node. The edge connecting to D is the sibling of the
  current node. The edge connecting to E is the sibling of the parent of the
  current node.} 
  \label{fig:prepare-expose-cases}
\end{figure}
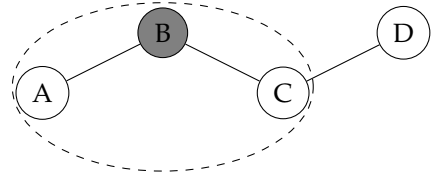
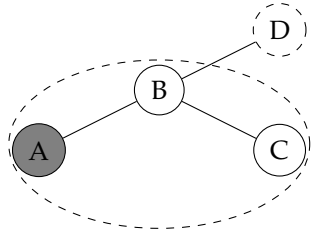
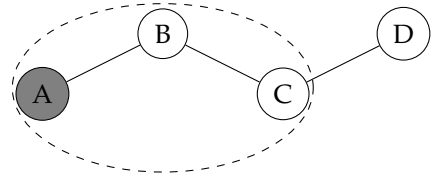
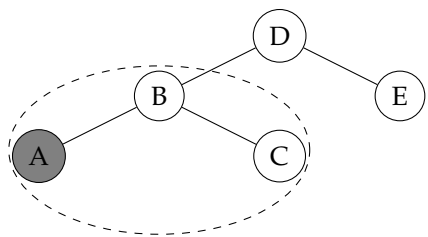
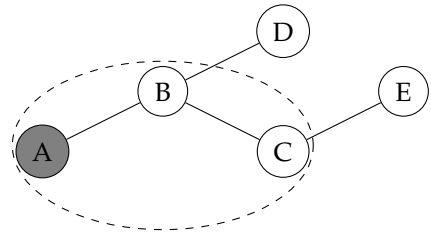

\begin{lemma}\label{lem:prepare-expose-prepares}
  When the \texttt{prepare\_expose} function returns, every cluster containing
  the vertex being exposed is either a point cluster, or already has the vertex
  as a boundary vertex.
\end{lemma}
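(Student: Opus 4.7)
The key observation is that the lemma reduces to the simpler statement that the consuming node and all of its ancestors in the top tree must be point clusters when \texttt{prepare\_expose} returns. Indeed, every descendant of the consuming node that contains the vertex already has the vertex as a boundary vertex: the vertex is the central vertex of the consuming node, hence a boundary of both of its children, and by induction down the subtree it is a boundary of every descendant containing it. On the other hand, for any cluster on or above the consuming node, the vertex has all of its incident edges inside the consuming node by definition, and since it is not yet exposed, it cannot be a boundary vertex of such a cluster; the only way the lemma can hold for such a cluster is therefore for it to be a point cluster.

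The plan is to prove this reduced statement via the loop invariant: \emph{at the start of each iteration, every proper descendant of \texttt{node} that lies on the chain from the consuming node up to the root is a point cluster.} This holds initially, because \texttt{node} is the consuming node and no proper descendant of it lies on the upper chain.

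For the inductive step I would proceed by case analysis. If \texttt{node} is a point cluster, advancing to \texttt{node.parent} adds the old \texttt{node} (a point cluster) to the set of strict descendants on the chain, preserving the invariant. If \texttt{node} is a path cluster, one of the six situations in Figure~\ref{fig:prepare-expose-cases} applies. In cases (a)--(d), I would read off from the figure that after \texttt{rotate\_up(other\_side\_child)} the cluster sitting at \texttt{node}'s former position has dropped off the upper chain, either because it no longer contains the vertex or because a neighboring edge has been pushed outside it so that the vertex becomes a genuine boundary vertex there; in cases (a) and (b) the consuming node itself shifts one level up, matching the reassignment of \texttt{consuming\_node} in the code. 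In either event, advancing \texttt{node} to \texttt{parent} preserves the invariant. In cases (e) and (f) \texttt{node} is not advanced; the rotation only reshapes the surrounding tree, and I would check that the set of strict descendants of \texttt{node} on the upper chain is unchanged (the rotated-up cluster and its new sibling both sit outside the chain) and that the next iteration falls into one of the earlier cases, in particular with case (f) producing exactly the configuration of case (d).

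For termination, the loop exits when \texttt{node} reaches the root, so by the invariant every cluster strictly below the root on the upper chain is a point cluster. The root itself, if it contains the vertex, must then also be a point cluster (otherwise the loop would still have more work to do), and combining this with the reduction above yields the lemma. The main obstacle will be the line-by-line verification of cases (a)--(f): each requires careful bookkeeping of which edges live in which cluster after the rotation and of how the orientation invariant is restored, and the validity of each rotation must be certified using Lemmas~\ref{lem:point-rotate} and~\ref{lem:path-rotate}. Case (f) is the most delicate, because it makes no direct progress on the invariant and only transforms the tree into a configuration that matches case (d) on the next iteration.
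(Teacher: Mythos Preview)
Your overall strategy is the same as the paper's: establish the loop invariant that every cluster on the chain from the consuming node up to (but not including) \texttt{node} is a point cluster, then do a case analysis on Figure~\ref{fig:prepare-expose-cases}. Your reduction at the start and your handling of the inductive cases are essentially what the paper does, if stated more loosely.

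There is, however, a genuine gap in your termination step. You write that the root ``must then also be a point cluster (otherwise the loop would still have more work to do).'' That reasoning is false: the loop condition is \texttt{while node is not the root}, so the loop exits the moment \texttt{node} becomes the root, without ever testing whether the root is a path or point cluster. Nothing in the loop mechanics forces the root to be a point cluster. The paper fills this hole by invoking the precondition of \texttt{expose}: at most one other vertex in the tree is already exposed, so the root has at most one boundary vertex and is therefore a point cluster. You need to cite that precondition explicitly; without it the lemma does not follow.

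A smaller point: your remark that ``in cases (e) and (f) \texttt{node} is not advanced'' is imprecise for case~(e), where \texttt{rotate\_up(node)} moves the object \texttt{node} one level closer to the root even though the variable is not reassigned. This does not break your invariant (no descendant of \texttt{node} on the chain is touched), but the phrasing could mislead a reader into thinking the depth of \texttt{node} is unchanged in that case.
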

\begin{proof}
  The above algorithm will maintain the following invariant: At every iteration,
  every node on the root path from the consuming node of the vertex and until
  the current node (exclusive) is a point cluster. Initially, this holds
  vacuously since the current node is equal to the consuming node.

  Whenever the algorithm reaches a point cluster, it will change the current
  node to its parent, which does not break the invariant. When it reaches a path
  cluster, it will be in one of the six cases in
  Figure~\ref{fig:prepare-expose-cases}. Cases (a), (b), (c) and (d) are handled
  in the \texttt{is\_path(same\_side\_child) or is\_point(sibling)} branch. Case
  (e) is handled in the \texttt{sibling\_idx == uncle\_idx} branch and case
  (f) is handled in the \texttt{else} branch. In case (a) and (b), the current
  node is equal to the consuming node both before and after the operation, so
  the invariant holds vacuously. In case (c), the cluster \(\{\text{AB},
  \text{BD}\}\) is added below the node, but it is a point cluster since only B
  is a boundary vertex, so this maintains the invariant. In cases (d), (e) and
  (f), no clusters relevant to the invariant are modified, so the invariant is
  maintained.

  The algorithm terminates when it reaches the root. Here, the invariant implies
  the lemma because every cluster containing the vertex is either the root, a
  point cluster according to the invariant, or a proper descendant of the
  consuming node. The root must be a point cluster too, since it is a
  precondition of \texttt{expose} that at most one other vertex is already
  exposed. The proper descendants of the consuming node that contain the vertex
  must already have the vertex as a boundary vertex by definition of consuming
  node.
\end{proof}

\begin{lemma}
  The node returned by \texttt{prepare\_expose} is the new consuming node of the
  vertex after the changes.
\end{lemma}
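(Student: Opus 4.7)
The plan is to prove the claim by a loop invariant, maintaining that at the start and end of every iteration of the \texttt{while} loop the local variable \texttt{consuming\_node} refers to the consuming node of the vertex in the \emph{current} state of the top tree. This holds initially since \texttt{consuming\_node} is the input, which by assumption is the consuming node, and since \texttt{prepare\_expose} returns \texttt{consuming\_node} after the loop terminates at the root, the invariant at termination gives the lemma.

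Each iteration falls into one of three categories. If \texttt{node} is a point cluster, the tree is not modified and the invariant trivially holds. If \texttt{node} is a path cluster and equals \texttt{consuming\_node} (cases (a) and (b) of the figure), the vertex being exposed is the central vertex of \texttt{node}, so each of its two children contains at least one edge incident to the vertex, and in fact all such incident edges lie in the union of these two subtrees (because \texttt{consuming\_node} contains them all). After \texttt{rotate\_up(other\_side\_child)} the cluster \texttt{other\_side\_child} becomes a direct child of the old \texttt{parent}, while \texttt{node} is rebuilt as \(\{\mathtt{same\_side\_child},\mathtt{sibling}\}\); incident edges that were in \texttt{other\_side\_child} now live in one child of \texttt{parent}, and those that were in \texttt{same\_side\_child} now live in another, so the new lowest common ancestor of the incident edges is \texttt{parent}. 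Moreover \texttt{parent} still contains the same set of edges as before, so the vertex is still not a boundary of \texttt{parent}, and \texttt{parent} is indeed the new consuming node, matching the assignment \texttt{consuming\_node = parent}.

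If \texttt{node} is a path cluster but distinct from \texttt{consuming\_node} (cases (c)--(f)), then by the proof of Lemma~\ref{lem:prepare-expose-prepares} the variable \texttt{consuming\_node} is a strict descendant of \texttt{node}, connected to it through a chain of point clusters. All rotations in these cases occur at or above the level of \texttt{node}, so every strict descendant of \texttt{node} (in particular \texttt{consuming\_node}) is untouched both as a tree node and in the set of edges it represents. Because the boundary relation depends only on the underlying forest and the exposed-vertex set (which are unaffected by rotations), \texttt{consuming\_node} still contains all edges incident to the vertex without having the vertex as a boundary. The only cluster contents that change are those of \texttt{node} (and, in cases (e) and (f), one additional node at the same level); a direct inspection of the figure shows that none of these new clusters is a strict subset of the old \texttt{consuming\_node}, so none can displace it as the smallest valid candidate. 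Hence no update to \texttt{consuming\_node} is needed and the invariant is preserved.

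The main obstacle will be the double-rotation analysis for cases (e) and (f): there one must verify the invariant not only after the pair of rotations but also confirm that the intermediate configuration created by the first rotation does not accidentally promote some new cluster to a smaller consuming-node candidate. This boils down to a careful bookkeeping against the figure, using the fact that all rotations stay at or above the level of \texttt{node}, so that the subtree rooted at the original \texttt{consuming\_node} is structurally untouched throughout.
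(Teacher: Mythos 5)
Your proof is correct and takes essentially the same approach as the paper's own (very terse, two-sentence) argument, which simply asserts that operations~(a) and~(b) are the only ones that change the consuming node and that the \texttt{node == consuming\_node} check is true exactly in those cases; you spell this out as a loop invariant over the \texttt{while} loop and verify it case by case against the figure. One small misconception worth correcting: the ``double-rotation analysis'' you anticipate for cases~(e) and~(f) does not arise, because each of those branches performs a \emph{single} \texttt{rotate\_up} --- case~(f) merely leaves a case-(d) configuration for the \emph{next} iteration of the loop --- so the invariant need only be checked after one rotation at a time, and there is no intermediate configuration to worry about.
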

\begin{proof}
  Operations (a) and (b) are the only ones that change the consuming node of the
  vertex, and the current node is the consuming node exactly those cases.
  Therefore, the \texttt{node == consuming\_node} condition in the pseudocode
  updates the consuming node in exactly the cases where it changes.
\end{proof}

\begin{lemma}
  The \texttt{prepare\_expose} function does \(O(\depth(\mathtt{node}))\) work
  and increases the amortization potential by at most \(O(\log n)\).
\end{lemma}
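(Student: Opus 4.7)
The plan is to bound the work and the amortized potential change separately. For the work bound I would argue that the main \texttt{while} loop advances \texttt{node} toward the root at a steady rate: the point-cluster branch and cases (a)-(d) explicitly do \texttt{node = parent}; case (e) performs \texttt{rotate\_up(node)}, which places \texttt{node} one tree level higher; and case (f) performs one rotation and leaves the situation in case (d) on the following iteration, which then advances. Hence the loop runs for $O(\depth(\mathtt{node}))$ iterations, and each iteration performs only a constant number of pointer manipulations, boundary-bit queries, and \texttt{rotate\_up} calls.

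For the potential change I would invoke the standard rotate-up analysis: when $\mathtt{rotate\_up}(x)$ is called with parent $y$ and grandparent $z$, only the node $y$'s leaf set changes (it loses $x$'s leaves and gains those of the uncle), while $z$'s leaf set is preserved. Since the new leaf set of $y$ is a subset of $z$'s, we get $\Delta\Phi \leq r(z) - r(y_{\mathrm{old}})$. Letting $y_0, y_1, \ldots, y_d$ denote the initial root path from the consuming node to the root, I would match each rotation to a rank gap along this path: cases (a)-(d) when \texttt{node} is at level $i$ contribute at most $r(y_{i+1}) - r(y_i)$, while cases (e) and (f) at level $i$ contribute at most $r(y_{i+2}) - r(y_{i+1})$. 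Each such gap is charged $O(1)$ times (the worst being a case (f) immediately followed by case (d), which together charge both $r(y_{i+2}) - r(y_{i+1})$ and $r(y_{i+1}) - r(y_i)$). The key invariant making this accounting valid is that a rotation at a given tree level preserves the leaf set of its grandparent, so the ranks of all nodes strictly higher on the root path are unchanged and remain correct upper bounds across successive iterations. The telescoping sum then yields
\[
\Delta\Phi_{\text{total}} \leq O\paren*{r(\mathrm{root}) - r(\mathtt{node})} = O(\log n).
\]

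The most delicate step will be the bookkeeping in cases (e) and (f), where the rotation takes place at a level above the current \texttt{node} and the root-path structure shifts non-trivially. I would address this by carefully pairing each rotation with a specific edge of the original root path and verifying, case by case, that no edge is charged more than a constant number of times across the entire execution of \texttt{prepare\_expose}.
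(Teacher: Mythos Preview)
Your work bound is correct and matches the paper. For the potential bound, the overall strategy---bound each rotation's $\Delta\Phi$ by a rank difference and telescope---is exactly what the paper does, but your specific bookkeeping has a real gap centred on case~(e).

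The claim ``cases (a)--(d) when \texttt{node} is at level $i$ contribute at most $r(y_{i+1}) - r(y_i)$'' fails once a case~(e) step has occurred. In case~(e) the call is \texttt{rotate\_up(node)}, which raises \texttt{node} one level \emph{without changing its leaf set}. After a run of (e)'s starting from level $j$, the variable \texttt{node} sits at some level $i>j$ but still has rank $r(y_j)$, not $r(y_i)$. A subsequent (a)--(d) rotation then has $y_{\text{old}}=\texttt{node}$, so its actual contribution is bounded only by $r(y_{i+1})-r(y_j)$, which spans all the gaps from $j$ to $i{+}1$, not just the single gap $[i,i{+}1]$ you budget for. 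Your invariant that ranks strictly above the rotation are preserved is true, but it only controls the upper endpoint $r(y_{i+1})$; it says nothing about the lower endpoint $r(\texttt{node})$, which is precisely what drifts after (e).

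The paper sidesteps this by telescoping against the \emph{current} values rather than the initial root path: it shows that the running sum of potential changes never exceeds $r(\texttt{node}) + r(\texttt{parent(node)})$ evaluated at the current iteration, which is at most $2\log_2 n$. Case~(e) then contributes at most $r(\text{new parent})-r(\text{old parent})$ while $r(\texttt{node})$ is carried over unchanged, and (f)${+}$(d) together contribute at most $r(\text{new node})+r(\text{new parent})-r(\text{old node})-r(\text{old parent})$; both preserve the two-term invariant directly. Your approach can be repaired---if you charge an (a)--(d) following a run of (e)'s to all the spanned gaps, each gap is still hit at most twice---but the paper's formulation avoids that extra case analysis.
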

\begin{proof}
  Since Figure~\ref{fig:prepare-expose-cases-swap} is always followed by case
  (d), we treat an (f) followed by (d) as a single iteration in this proof.
  Since the lemma is formulated in terms of big-O, this doesn't change anything.

  Every iteration of \texttt{prepare\_expose} decreases the depth of
  \texttt{node} by one, so it makes \(O(\depth(\mathtt{node}))\) iterations of
  constant cost as desired.

  Each operation changes the amortization potential by:
  \begin{description}
    \item[(a)] \(r(\{\text{AB},\text{AD}\})-r(\{\text{AB},\text{BC}\}) \leq
      r(\{\text{AB},\text{AD},\text{BC}\})-r(\{\text{AB},\text{BC}\}) =
      r(\text{new node})-r(\text{old node})\)
    \item[(b)] \(r(\{\text{BC},\text{CD}\})-r(\{\text{AB},\text{BC}\}) \leq
      r(\{\text{AB},\text{BC},\text{CD}\})-r(\{\text{AB},\text{BC}\}) =
      r(\text{new node})-r(\text{old node})\)
    \item[(c)] \(r(\{\text{AB},\text{BD}\})-r(\{\text{AB},\text{BC}\}) \leq
      r(\{\text{AB},\text{BD},\text{BC}\})-r(\{\text{AB},\text{BC}\}) =
      r(\text{new node})-r(\text{old node})\)
    \item[(d)] \(r(\{\text{BC},\text{CD}\})-r(\{\text{AB},\text{BC}\}) \leq
      r(\{\text{AB},\text{BC},\text{CD}\})-r(\{\text{AB},\text{BC}\}) =
      r(\text{new node})-r(\text{old node})\)
    \item[(e)] \(r(\{\text{BD},\text{DE}\})-r(\{\text{BD},\mathtt{node}\}) \leq
      r(\{\text{BD},\text{DE},\mathtt{node}\})-r(\{\text{BD},\mathtt{node}\}) =
      r(\text{parent}(\text{new node})) - r(\text{parent}(\text{old node}))\)
    \item[(f)+(d)] \(r(\{\text{BC},\text{CE}\})+r(\{\text{AB},\text{BC},\text{CE}\})
      -r(\{\text{AB},\text{BC}\})-r(\{\text{AB},\text{BC},\text{BD}\}) \leq
      r(\{\text{AB},\text{BC},\text{CE},\text{BD}\})+r(\{\text{AB},\text{BC},\text{CE}\})
      -r(\{\text{AB},\text{BC}\})-r(\{\text{AB},\text{BC},\text{BD}\}) =
      r(\text{parent(new node)})+r(\text{new node})-r(\text{old node})-r(\text{parent(old node)})\)
  \end{description}
  The total change in potential is the sum of the above expressions, and the sum
  of the first \(i\) terms will always be at most \(r(\text{node in $i$th
  iteration}) + r(\text{parent(node) in $i$th iteration})\) for every \(i\), so
  the total change in the potential is at most \(2\log_2 n\) as desired.
\end{proof}

\subsection{Exposing a prepared vertex}

Once the vertex has been prepared, we need to make it an actual boundary vertex
of the consuming node and its ancestors. This amounts to going through each node
on the root path from the consuming node and increasing its number of boundary
vertices by one. It is also sometimes necessary to flip the nodes, since if the
node is a left child and its rightmost boundary vertex is the middle vertex,
then if the vertex being exposed becomes the right boundary vertex of the node,
then the middle vertex will no longer be the rightmost boundary vertex. In this
case, we should flip the node so that the new boundary vertex becomes the left
boundary vertex instead.

The pseudocode for this is as follows:
\begin{verbatim}
fn expose_prepared(consuming_node) {
    from_left = false
    from_right = false
    node = consuming_node
    while true {
        node.num_boundary_vertices += 1
        parent = node.parent
        if parent is null {
            // Here we return the root node.
            return node
        }

        is_left_child_of_parent = parent.children[0] == node;
        is_right_child_of_parent = !is_left_child_of_parent;

        if (is_left_child_of_parent && from_right)
            || (is_right_child_of_parent && from_left)
        {
            node.flip = !node.flip
        }

        from_left = is_left_child_of_parent != parent.flip
        from_right = is_right_child_of_parent != parent.flip
        node = parent
    }
}
\end{verbatim}
In each iteration, increasing the number of boundary vertices will introduce a
new boundary vertex to the current node. This new boundary vertex could be to
the left, in the middle, or to the right. If our vertex is in the middle, then the
existing boundary vertex isn't in the middle so a flip is not necessary. If we
are to the left/right, then we should flip if we become the innermost boundary
vertex, which the code checks by looking at whether we are the left or right
child of the parent, ignoring the parent's flip since the parent's flip is only
relevant when comparing the parent to the grandparent. We note that the first
iteration is the only time that the new boundary vertex can be in the middle,
which is represented as \(\mathtt{from\_left} = \mathtt{from\_right} =
\mathtt{false}\). If the vertex being exposed is not the middle vertex of the
consuming node, then the consuming node is a leaf node. However, since leaf
nodes never have a middle boundary vertex, we also don't need to flip in this
situation.

The \texttt{expose\_prepared} function does
\(O(\depth(\mathtt{consuming\_node}))\) work and does not change the
amortization potential.

\subsection{The full expose operation}
Using the two pieces introduced above, the full expose operation is the
following:
\begin{verbatim}
fn expose(vertex) {
    consuming_node = get_consuming_node(vertex)
    if consuming_node is not null {
        consuming_node = prepare_expose(consuming_node)
        root = expose_prepared(consuming_node)
        vertex.exposed = true
        return root
    } else {
        // The vertex has degree zero.
        vertex.exposed = true
        return null
    }
}
\end{verbatim}
Adding up the contributions of each call, the above function does
\(O(\depth(x))\) work and changes the potential by at most \(\Delta\Phi \leq
O(\log n) - \Omega(\depth(x))\), where \(x\) is the edge incident to the vertex
that \texttt{get\_consuming\_node} chooses. Thus, the amortized running time of
\texttt{expose} is \(O(\log n)\).

%
%
%
%
%
%
%
%
%
%
%
%
%

\end{document}